\newcommand{\Not}{\text{\bf not}}
\newcommand{\tus}{\textunderscore}
\newtheorem{lemma}{Lemma}
\newtheorem{theorem}{Theorem}
\newtheorem{corollary}{Corollary}
\newtheorem{definition}{Definition}
\DeclarePairedDelimiter{\ceil}{\lceil}{\rceil}
\newcolumntype{+}{!{\vrule width 2pt}}
\newlength\savedwidth
\begin{document}
\vspace*{0.2in}

\begin{flushleft}
{\Large
\textbf\newline{Selection on $X_1+X_2+\cdots + X_m$ with layer-ordered heaps} 
}
\newline
\\
Patrick Kreitzberg\textsuperscript{1},
Kyle Lucke\textsuperscript{2},
Oliver Serang\textsuperscript{2*}

\bigskip
\textbf{1} Department of Mathematics, University of Montana, Missoula, MT, USA
\textbf{2} Department of Computer Science, University of Montana, Missoula, MT, USA
\\
\bigskip

* oliver.serang@umontana.edu

\end{flushleft}

\begin{abstract}
  \noindent Selection on $X_1+X_2+\cdots + X_m$ is an important
  problem with many applications in areas such as max-convolution,
  max-product Bayesian inference, calculating most probable isotopes,
  and computing non-parametric test statistics, among others.
  Faster-than-na\"{i}ve approaches exist for $m=2$: Frederickson
  (1993) published the optimal algorithm with runtime $O(k)$ and
  Kaplan \emph{et al.} (2018) has since published a much simpler
  algorithm which makes use of Chazelle's soft heaps (2003). No fast
  methods exist for $m>2$. Johnson \& Mizoguchi (1978) introduced a
  method to compute the single $k^{th}$ value when $m>2$, but that
  method runs in $O(m\cdot n^{\lceil\frac{m}{2}\rceil} \log(n))$ time
  and is inefficient when $m \gg 1$ and
  $k \ll n^{\lceil\frac{m}{2}\rceil}$.

  In this paper, we introduce the first efficient methods, both in
  theory and practice, for problems with $m>2$. We introduce the
  ``layer-ordered heap,'' a simple special class of heap with which we
  produce a new, fast selection algorithm on the Cartesian
  product. Using this new algorithm to perform $k$-selection on the
  Cartesian product of $m$ arrays of length $n$ has runtime
  $\in o(k\cdot m)$. We also provide implementations of the algorithms
  proposed and evaluate their performance in practice.
\end{abstract}

\section*{Introduction}
Sorting all values in $A+B$, where $A$ and $B$ are arrays of length
$n$ and $A+B$ is the Cartesian product of these arrays under the $+$
operator, is nontrivial. In fact, there is no known approach faster
than naively computing and sorting them which takes
$O(n^2 \log(n^2))=O(n^2 \log(n))$\cite{bremner:necklaces}; however,
Fredman showed that $O(n^2)$ comparisons are
sufficient\cite{fredman:good}, though there is no known $O(n^2)$
algorithm. In 1993, Frederickson published the first optimal
$k$-selection algorithm on $A+B$ with runtime
$O(k)$\cite{frederickson:optimal}. In 2018, Kaplan \emph{et al.}
described another optimal method for $k$-selection on $A+B$, this time
in terms of soft heaps\cite{kaplan:selection}\cite{chazelle:soft}.

\subsection*{Soft heaps}
The soft heap was invented by Chazelle to be used in a deterministic
minimum spanning tree algorithm with the best known time
complexity\cite{chazelle:minimum}. The soft heap is a priority queue
similar to a binomial heap; however, the soft heap is able to break
the $\Omega(n\log(n))$ information theoretic runtime required by other
priority queues. This is done by allowing items in the heap to become
``corrupted'' by changing its key from the original value. With
corruption, popping from a soft heap $k$ times does not guarantee to
retrieve the best $k$ values. The soft heap has a tunable parameter,
$\epsilon \in \left(0, \frac{1}{2}\right)$, to control the amount of
corruption in the heap. In Chazelle's original implementation, all
soft heap operations have amortized constant runtime with the
exception of insertion which is $O(\log(\frac{1}{\epsilon}))$. The
number of corrupt items is bounded by $ \epsilon\cdot I$, where $I$ is
the number of insertions performed so far. Soft heaps perform the
standard insert, delete, and findmin operations.

In 2018, Kaplan \emph{et al.} introduced a new version of the soft
heap which is easier to implement and shifted the
$O(\log(\frac{1}{\epsilon}))$ runtime from the insert to the delete-min
operation. In this manuscript, we use an implementation of Kaplan
\emph{et al.}'s soft heap written in {\tt C++}.

\subsection*{Kaplan {et al.}'s Soft-Select algorithm}
Kaplan {et al.}'s optimal selection on $A+B$ is performed by using
{\tt Soft-Select}, an algorithm which is able to select the top $k$
items from a binary heap in $O(k)$.  This method works by first binary
heapifying $A$ and $B$, each in $O(n)$ time. Heapification orders $A$
so that $A_i \leq A_{2 i}$ and $A_i \leq A_{2 i + 1}$ (note that
1-indexing is throughout this manuscript), similar for $B$. This
heapification means that the child of some index $i$ may not appear in
the smallest $k$ values of $A+B$ if $i$ does not, since
$A_i + B_j \leq A_{2*i} + B_j, A_{2*i+1} + B_j$ for any $i,j$. Kaplan
\emph{et al.} gives a method by which we enumerate every value in
$A+B$ without duplicates where children are only considered after
their parents in the binary heap ordering, starting with the minimum
value $A_1 + B_1$ (equation~\ref{eqn:kaplan-proposal}). All currently
considered values (and their pairwise $(i,j)$ indices) are stored in a
soft heap.  When index $(i,j)$ is popped from the soft heap as a
candidate for the next lowest value, the following indices are
inserted into the soft heap as new candidates:
\begin{equation}
\label{eqn:kaplan-proposal}
\begin{cases}
  \{(2i,1), (2i+1,1), (i,2), (i,3)\}, & j=1\\
  \{(i,2j), (i,2j+1)\}, & j>1.\\
\end{cases}
\end{equation}
When corruption occurs, the corrupted items are appended to a list
$S$. The minimal $k$ values in the heap are obtained by doing a
one-dimensional selection on the union of the $k$ values popped and
$S$ median-of-medians\cite{blum:time} or soft heap-based
one-dimensional selection\cite{chazelle:soft}. Since the number of
corrupted items is $\leq \epsilon I$ and $I \leq 4k$ the soft heap
selection is linear in $k$.

In 1978, Johnson and Mizoguchi \cite{johnson:selecting} extended the
problem to selecting the $k^{th}$ element in
$X_1 + X_2 + \cdots + X_m$ and did so with runtime
$ O(m\cdot n^{\lceil\frac{m}{2}\rceil} \log(n))$; however, there has
not been significant work done on the problem since. Johnson and
Mizoguchi's method is the fastest known when
$k > n^{\lceil\frac{m}{2}\rceil}\log(n)$ but is significantly worse if
$k\ll n^{\lceil\frac{m}{2}\rceil}\log(n)$ and only returns a single
value, not the top $k$.

Selection on $X_1 + X_2 + \cdots + X_m$ is important for
max-convolution\cite{bussieck:fast} and max-product Bayesian
inference\cite{serang:fast,pfeuffer:bounded}. Computing the $k$ best
quotes on a supply chain for a business, when there is a prior on the
outcome (such as components from different companies not working
together) this becomes solving the top values of a probabilistic
linear Diophantine equation \cite{kreitzberg:toward} and thus becomes
a selection problem. Finding the most probable isotopes of a compound
such as hemoglobin, $C_{2952}H_{4664}O_{832}N_{812}S_8Fe_4$, may be
done by solving $C +H + O+N +S+Fe$, where $C$ would be the most
probable isotope combinations of 2,952 carbon molecules, $H$ would be
the most probable isotope combinations of 4,664 hydrogen molecules,
and so on \cite{kreitzberg:fast}.

\subsection*{Overview of the manuscript} In this paper, we construct
generalizations for selection on $X_1+X_2+\cdots + X_m$ when
$m \gg 1$, with and without soft heaps.  First, we develop a method
which is a direct generalization of Kaplan \emph{et al.}'s selection
on $A+B$ method we call SoftTensor. Second, we develop a method which
constructs a balanced binary tree where each interior node performs
pairwise selection on $A+B$ using {\tt Soft-Select} , we call this
method SoftTree. Third, we develop a method which sorts the input
arrays then builds an $m$-dimensional tensor which iteratively
retrieves the next smallest value until $k$ values are retrieved. This
method will output the top $k$ values in sorted order, we call this
method SortTensor. Fourth, we develop a method which constructs a
balanced binary tree using pairwise selection on $A+B$ where $A$ and
$B$ are sorted and iteratively retrieve the next smallest value. This
method also reports the top $k$ in sorted order, we call this method
SortTree. Fifth, we develop FastSoftTree, a method which constructs a
balanced binary tree of pairwise selection nodes which use a stricter
form of a binary heap to perform selection with runtime $o(k\cdot
m)$. Methods which use a soft heap use Kaplan {et al.}'s soft heap due
to the ease of implementation and the constant time insertions. The
theoretical runtime and space usage of each method is shown in
Table~\ref{table:runtimes-and-space}. All algorithms are implemented
in {\tt C++} and are freely available.

\begin{table}
  \begin{tabular}{r|ll}
    Method       & Runtime & Space\\
    \hline
    SoftTensor   & $O(k\cdot m^2)$ & $O(k\cdot m^2)$ \\
    SoftTree     & $O(k\cdot m)$ & $O(k\cdot m)$ \\
    SortTensor   & $O(k\cdot m^2 + k\log(k\cdot m\cdot n)$ & $O(min(k\cdot m^2, n^{m-1}))$ \\
    SortTree     & $O(k\log(k)\log(m))$ & $O(k\log(k)\log(m))$ \\
    FastSoftTree & $O(k\cdot m^{\log_2(\alpha^2)})$ & $O(k\log(m))$ \\
  \end{tabular}
  \caption{All runtimes are after the data is loaded and stored and,
    where applicable, the tree is constructed which takes time and
    space $\Theta(m\cdot n)$. In FastSoftTree, $\alpha$ is a constant
    parameter which will be explained below. Importantly, FastSoftTree
    can achieve runtime $O(k\cdot m^{0.141\ldots}) \in o(k\cdot m)$ by
    setting $\alpha=1.05$. }\label{table:runtimes-and-space}
\end{table}   


\section*{Materials and methods}
\subsection*{SoftTensor: a direct $m$-dimensional generalization of Kaplan \emph{et al.} 2018} 
SoftTensor is a straight-forward modification to {\tt Soft-Select}
that performs selection on $X_1 + X_2 + \cdots + X_m$. The most
significant modifications are the choice of $\epsilon$ and the
proposal scheme which must insert $\leq 2m$ children every time an item
is extracted from the soft heap. Similar to Kaplan \emph{et al.}, the
input arrays are binary heapified.

Indices are generated in a way that is a generalization of the
proposal scheme from Kaplan \emph{et al'}. For a given index
$(i_1,i_2,\ldots, i_\ell)$ there are two cases. The first, and more
simple case, is when $i_\ell > 1$. In this case only two children are
created and inserted into the soft heap, assuming they both exist:
$(i_1,i_2,\ldots, 2i_\ell)$ and $(i_1,i_2,\ldots, 2i_\ell + 1)$. In
the second case, let $i_j$ be the right-most index such that $i_j>1$
and $j<\ell$. Then $2(1+\ell - j)$ children will be created, each of
which will advance one index $\geq j$ in heap order, assuming the
children exist: $(i_1,i_2,\ldots, 2i_j, 1,1,1,\ldots)$,
$(i_1,i_2,\ldots, 2i_j+1, 1,1,1,\ldots)$,
$(i_1,i_2,\ldots, i_j, 2,1,1,\ldots)$,
$(i_1,i_2,\ldots, i_j, 3,1,1,\ldots)$,
$(i_1,i_2,\ldots, i_j, 1,2,1,\ldots)\ldots$. From the viewpoint of an
$m$-dimensional tensor this is equivalent to each of the $m$ axes
advancing in heap order.

Given any index $(i_1,i_2,\ldots, i_\ell)$, with $i_j$ defined as
above, the unique parent who proposed it is
$(i_1,i_2,\ldots, \lfloor\frac{i_j}{2}\rfloor, \ldots, i_\ell)$. Once
popped, no index will be re-inserted and therefore will only propose
their children once. The index of the parent will always be smaller
than the child and will never drop below zero. Therefore, there is a
path from every index to $(0,0,\ldots,0)$.  A child has all the same
indices as a parent except one is advanced in heap order, causing the
parent to always be higher priority than the children. Since all
indices can be reached from the original insertion $(0,0,\ldots,0)$,
and each child is inferior to its parent, this method must insert the
correct indices.

SoftTensor now proceeds exactly as {\tt Soft-Select}, only differing
in the insertion scheme. At each pop operation, up to $2m$ children
will be inserted by summing the appropriate $X_i$ entries (computed
na\"{i}vely in either $O(m)$ time or by observing that it has only a
single term changed from a previously created index, enabling
computation in $O(1)$ time) and creating their tuple index. After $k$
pops, the final values are computed by doing a one-dimensional
$k$-selection on $O(k)$ values. The fact that Kaplan \emph{et al}'s
soft heap has constant time insertions is important here since the
number of insertions will be much greater than the number of pops.

There has to be some care in selecting $\epsilon$ for this
method. {\tt Soft-Select} will pop $k$ items from the soft heap while
inserting $I < 2m(k + C)$ items, where $C$ is the number of items in
the soft heap which become corrupt. By the definition of the soft
heap, $C \leq \epsilon\cdot I < 2\epsilon m(k+C).$ Solving for $C$ we
get $C < \frac{2\epsilon \cdot k\cdot m}{1-2\epsilon m}$ and so
$I < 2m\cdot k\left( \frac{2m\epsilon}{1-2m\epsilon} \right)$. Therefore, we
need $\epsilon < \frac{1}{2m}$. In practice, we set
$\epsilon = \frac{1}{3m}$. For SoftTensor, popping from the soft heap
actually takes $O(\log(m))$; however this is dwarfed by simply
calculating the indices of the children of the popped item.

In this scheme, the $m$-dimensional tuple indices are necessary,
because they will be used to generate the subsequent children tuple
indices. Because of the necessity of storing the tuple indices, each
pop operation will require $\leq 2m$ insertions, which will each cost
$O(m)$, and thus each pop operation will cost $O(m^2)$. It may be
possible to compress these $m$-tuples to only reflect the ``edits'' to
the tuple which inserted them; however, we will later introduce the
SoftTree method, which performs a superior variant of this idea.

\subsubsection*{Space}
The space usage is dominated by the number of terms (which each
include an $m$-dimensional index) in the soft heap. The number of
index tuples in the soft heap after $k$ pop operations are performed
will be $\leq 2k\cdot m$. Each index tuple has size $m$, so the space
usage will be $O(k\cdot m^2)$. Thus, the total space usage, including
the $n\cdot m$ cost to store the arrays, is
$\in O(n\cdot m + k\cdot m^2)$.

\subsubsection*{Runtime}
The number of items in the one-dimensional selection, after the $k$
pops, will be $\leq k + C$. As described above,
$k + C < k + \frac{2\epsilon \cdot m\cdot k}{1-2\epsilon m} \in O(k)$.  Each pop
operation in the soft heap costs $O(\log(m))$ which each insertion
costs $O(m^2)$. Therefore, the runtime will be dominated by the cost
of insertions and will be $O(k\cdot m^2)$.

\subsubsection*{Implementation}
A pseudocode implementation of the selection function for SoftTensor
is provided in Algorithm~\ref{alg:softtensor-select}.

\begin{algorithm}[H]
\DontPrintSemicolon
  \KwInput{$k$, $m$ arrays of length $n$}
  \KwOutput{Top $k$ values}
  soft\textunderscore heap $\gets$ {\tt SoftHeap}($\epsilon=\frac{1}{3m}$)  \;
  soft\tus heap.insert($((1,1,\ldots,1), X_1[1] + X_2[1] + \cdots + X_m[1])$)\;
  corrupted\tus items $ \gets$ {\tt vector}\;
  results $ \gets$ {\tt vector}\;
  \For{$i \gets 1$ to $k$}
    {
      popped\tus item $\gets$ soft\tus heap.pop(corrupted\tus items)\;
      \If {popped\tus item.is\tus corrupt = false}
      {
        corrupted\tus items.append(popped\tus item)\;
      }

      \For{item : corrupted\tus items}
      {
        insert\tus children\tus into\tus soft\tus heap(item)\;
        results.append(item.key)\;
        }
      }
      \Return one\tus dimensional\tus select(results, k)\;
\caption{SoftTensor::select}\label{alg:softtensor-select}
\end{algorithm}

\subsection*{SoftTree: a soft heap-based balanced binary tree method}
The SoftTensor method suffers when $m\gg 1$, mainly due to the
$O(m^2)$ required during insertions. SoftTree is a method which
similarly uses soft heaps to perform selection on a Cartesian product
but without requiring calculating and storing indices of length $m$,
thus performing selection with runtime $o(k\cdot m^2)$.

SoftTree works by forming a balanced binary tree of nodes that perform
pairwise selection. The leaves take one of the $m$ input arrays and
when asked for $k$ values from a parent they simply perform a
one-dimensional selection on the input array. Being a balanced binary
tree with $m$ leaves, the tree is always finite. Nodes above the
leaves perform pairwise selection on $A+B$ using {\tt Soft-Select}
where $A$ and $B$ are the data generated from the left and right
children, respectively. It is nontrivial to stack these pairwise soft
heap selection nodes in this manner because they require their inputs
$A$ and $B$ to be binary heaps with random access. The output of these
pairwise soft heap selection nodes must therefore be made to be in
heap order with random access. The key is that any pairwise selection
only needs to keep track of length two indices, avoiding the $m^2$
cost of insertion in SoftTensor.

To perform selection on $X_1 + X_2 + \cdots + X_m$ the root node is
asked for its top $k$ values. The root node then asks its children to
generate values so that it may perform a pairwise selection, this
ripples down the tree to the leaves. If a child generates all of their
possible values anytime a parent needs to perform selection, this may
cause an exponential number of excess values generated.

\begin{lemma}
  \label{thm:a-plus-b-leq-k-plus-1}
  Let $a$ be the number of terms of $A$ required to produce the
  minimal $k$ values of the form $A_i+B_j$. Let $b$ be the number of
  terms of $B$ required to produce the minimal $k$ values of the form
  $A_i+B_j$. Then $a+b-1\leq k$.
\end{lemma}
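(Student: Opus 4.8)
The plan is to exploit the coordinatewise monotonicity of the sum $A_i+B_j$. First I would re-index $A$ and $B$ in non-decreasing order of value, so that rank $1$ denotes the smallest element of each array. The quantities $a$ and $b$ count how many distinct values of each array are needed, so they are unaffected by re-indexing, and ``the number of terms of $A$ required'' becomes ``the largest rank of an $A$-element appearing in any of the minimal $k$ pairs,'' and similarly for $b$. Let $S$ denote a set of $k$ index pairs $(i,j)$ realizing the minimal $k$ sums.

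The crucial structural observation is that $S$ can be taken to be downward closed in the product order on pairs: if $(i,j)\in S$ and $i'\le i$, $j'\le j$, then $A_{i'}+B_{j'}\le A_i+B_j$, so $(i',j')$ is at least as good and must also lie among the minimal $k$ values (choosing a consistent tie-breaking rule when sums coincide, so that $S$ is genuinely an order ideal). A consequence is that the set of $A$-ranks used is an initial segment $\{1,\dots,a\}$ and the set of $B$-ranks used is an initial segment $\{1,\dots,b\}$, which confirms that $a$ and $b$ equal these maxima and that there are no ``gaps'' in the ranks that would decouple the count from the maximum.

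Given downward closure, I would then exhibit $a+b-1$ distinct pairs forced to lie in $S$. Since the maximum $A$-rank $a$ is attained, some $(a,j)\in S$, and downward closure places the entire first column $(1,1),(2,1),\dots,(a,1)$ in $S$; symmetrically, the entire first row $(1,1),(1,2),\dots,(1,b)$ lies in $S$. These two families intersect only in $(1,1)$, so their union contains exactly $a+b-1$ elements, all inside $S$. Hence $k=|S|\ge a+b-1$, which is precisely the claimed bound $a+b-1\le k$.

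The main obstacle I anticipate is the bookkeeping around ties together with the mismatch between the heap layout the algorithm actually uses and the sorted order underlying the argument: I must argue that the number of elements a child must emit depends only on which values are required, not on the heap's internal arrangement, and that ties among equal sums can be broken so that the minimal set is a true order ideal without inflating $a$ or $b$. Once that reduction is pinned down, the inclusion--exclusion count of the first row and first column is routine.
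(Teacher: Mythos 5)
Your proof is correct and takes essentially the same route as the paper's: after sorting both arrays, you force the first column $(1,1),\dots,(a,1)$ and first row $(1,1),\dots,(1,b)$ into the minimal-$k$ set and count their union as $a+b-1$, exactly as the paper does. The only difference is presentational --- you phrase the monotonicity step as downward closure (an order ideal under index-based tie-breaking), while the paper encodes the same tie-breaking by ordering tuples $(A_i+B_j,i,j)$ lexicographically.
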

\begin{proof}
Although the algorithm does not sort, here we refer to $A$ and $B$ in
ascending order: $A_i\leq A_{i+1}$ and $B_i\leq B_{i+1}$. Denote the
sorted tuples $(A_{i_s}+B_{j_s},i_s,j_s)\leq
(A_{i_{s+1}}+B_{j_{s+1}},i_{s+1},j_{s+1})$ sorted in ascending
order. Denote $S$ as the minimal $k$ terms: $S =
\{(A_{i_1}+B_{j_1},i_1,j_1), \ldots (A_{i_k}+B_{j_k},i_k,j_k)\}$.

By supposition, $\{i_1, i_2, \ldots i_k\}$ = $\{1, 2, \ldots a\}$ and
$\{j_1, j_2, \ldots j_k\}$ = $\{1, 2, \ldots b\}$.
$\forall i,j,~ (A_i+B_j,i,j)\geq (A_1+B_j,1,j),
(A_i+B_1,i,1)$. W.l.o.g., for any $i \in \{1, 2, \ldots a\}$,
$\exists j:~ (A_i+B_j,i,j)\in S$; therefore,
$(A_i+B_1,i,1)\leq (A_i+B_j,i,j)$ and thus $(A_i+B_1,i,1) \in S$.
$\left|\{(A_i+B_1,i,1) \mid i \in 1, 2, \ldots a \} \bigcup
  \{(A_1+B_j,1,j) \mid j \in 1, 2, \ldots b \}\right| =
\left|\{(A_i+B_1,i,1) \mid i \in 1, 2, \ldots a \}\right| +
\left|\{(A_1+B_j,1,j) \mid j \in 1, 2, \ldots b \}\right| -
\left|\{(A_1+B_1,1,1)\}\right| = a+b-1$.  Thus $S$ must contain these
$a+b-1$ terms, and so $a+b-1\leq k$.
\[
 \qedhere
\]
\end{proof}

By lemma~\ref{thm:a-plus-b-leq-k-plus-1}, the total number of values
required from $A$ and $B$ will together be $\leq k+1$. Thus, it is
sufficient to let $a=b=k$. This yields a simple, recursive algorithm
that proceeds in a pairwise manner: select the minimal $k$ terms from
the left child, select the minimal $k$ terms from the right child, and
then use only those values to select the minimal $k$ terms from their
Cartesian product.

\subsubsection*{Space}
Aside from leaves, every node in the tree will generate $k$ values
from its Cartesian sum, each in $O(k)$ time via optimal pairwise
selection on $A+B$. There are $m$ leaves, $\frac{m}{2}$ nodes on the
previous layer, \emph{etc.}, and thus $<2m$ total nodes in the tree
via geometric series. After the $n\cdot m$ cost of storing the input
data, the space requirement of this method is $O(k\cdot m)$.

\subsubsection*{Runtime}
For reasons similar to the space usage above (each pairwise $A+B$
selection node has a linear runtime in $k$), the runtime is
$O(k\cdot m)$ after loading the data.

\subsubsection*{Implementation}
A pseudocode implementation of the selection function for SoftTree is
provided in
Algorithms~\ref{alg:softtree-select} and \ref{alg:softselectnode-select}.

\begin{algorithm}[H]
\DontPrintSemicolon
  \KwInput{$k$, input\tus arrays}
  \KwOutput{Top $k$ values}
  root $\gets$ build\tus tree(input\tus arrays) \tcp*{{\tt SoftPairwiseSelectNode} object}
  \Return root.select($k$)
\caption{Select}\label{alg:softtree-select}
\end{algorithm}

\begin{algorithm}[H]
\DontPrintSemicolon
  \KwInput{$k$}
  \KwOutput{Top $k$ values of $A+B$}
  A = left\tus child.select(k) \;
  B = right\tus child.select(k) \;
  \Return SoftSelect(A,B,k) \tcp{Calls {\tt Soft-Select} from Kaplan \emph{et al.}}
\caption{SoftSelectNode::select}\label{alg:softselectnode-select}
\end{algorithm}

\subsection*{SortTensor: a direct $m$-dimensional generalization of the sorting-based $C=A+B$ method}
Forming a matrix sorted by both rows and columns can be a fast way to
get the minimum $k$ values from two vectors $A, B$. This can be
observed by sorting both $A$ and $B$ in ascending order, and then
sparsely building a matrix of $A_i+B_j$. If $A'$ and $B'$ represent
sorted vectors such that $A_1' \leq A_2' \leq \cdots$ and $B_1' \leq
B_2' \leq \cdots$, then the minimal value of $C$ is $A_1'+B_1'$. The
second smallest value in $C$ is $\min(A_1'+B_2', A_2'+B_1')$, and so
on. Whenever the next minimum value has been obtained its direct
neighbors can be inserted into the matrix, the collection of inserted
values which have not yet been popped form a ``hull.'' In practice, it
is efficient to use a min-heap to store the values in the hull, then
to get the next minimal value simply pop from the hull and then insert
the neighbors of the recently popped item.

The direct generalization of the $A+B$ method is straightforward:
Instead of a matrix, we have an $\mathbb{R}^m$ tensor, where the hull
is composed of a collection of $m$-dimensional indices. In each
iteration, we pop the minimal value from the hull and append that to
the result vector. Let this minimal value come from index
$(i_1,i_2,\ldots i_m)$. Now insert the $m$ values from the neighbors
of index $(i_1,i_2,\ldots i_m)$ into the heap holding the hull, if
they have not been inserted previously:
$(i_1+1,i_2,\ldots,i_m), (i_1,i_2+1,\ldots,i_m), \ldots
(i_1,i_2,\ldots,i_m+1)$. At most, each pop may insert $m-1$ indices
which are not already in the hull. As with the two-dimensional method,
it is possible to store not only the
$X_{1,i_1} + X_{2,i_2} + X_{3,i_3} + \cdots X_{m,i_m}$ in the heap,
but also store the index tuple from which it came.

Note that the selection of the minimal $k$ values in $X_1+X_2+\cdots +
X_m$ will be reported in sorted order.

\subsubsection*{Space}
The size of the hull, denoted using $h$, grows by $m-1$ values
(excluding the first iteration which adds $m$ values). Each of these
values will be accompanied by an $m$-dimensional index. Generalizing
the 2D case, the maximal size of $h$ is $\in O(n^{m-1})$, because it
will have the largest size as a diagonal hyperplane across all $m$
axes. Loading and storing the data is $\Theta(n\cdot m)$, and after
$k$ values are retrieved the hull will take storage space
$O(\min(k\cdot m^2, n^{m-1}))$.

\subsubsection*{Runtime}
Initially, the input vectors are heapified which costs $O(n\cdot
m)$. In each iteration, there are $O(m)$ objects inserted into the
hull's heap (one advancement per axis) and a single removal. Note that
each of the objects inserted includes an $m$-dimensional index. Using
a binomial heap (which has amortized $O(1)$ insert), the time to
insert each of these $O(m)$ objects is $O(1)$, but it takes $O(m)$
steps to construct each $m$-dimensional tuple index. Thus, the
insertion time per iteration is in $O(m^2)$; therefore, over $k$
iterations, it costs $O(k\cdot m^2)$.

In each iteration the hull grows in size by $O(m)$ values, so in the
first iteration the cost to pop from the hull is $O(\log(m))$, in the
second iteration the cost is $O(\log(2 m))$, \emph{etc}. Over all $k$
iterations the cost of popping is
$O(\log(m) + \log(2 m) + \dots + \log(k\cdot m)) = O(\log(m\cdot
(2\cdot m) \cdot (3\cdot m)\cdots (k\cdot m))) = O(\log(m^k \cdot k!))
= O(k\log(m) + k \log(k))$. 

In each iteration, the margin of at most one axis is advanced (see
invariant below). Advancing that margin requires popping the min value
from that input heap. This costs $O(\log(n))$ per iteration. In total,
the runtime after loading and heapifying the input arrays is
$O(n\cdot m + k\cdot m^2 + k\log(m) + k\log(k) + k\log(n)) = O(n\cdot
m + k\cdot m^2 + k\log(k\cdot n))$.

\subsubsection*{Implementation}
A pseudocode implementation of the selection function for SortTensor is provided
in
Algorithm~\ref{alg:sorttensor-select}.

\begin{algorithm}[H]
\DontPrintSemicolon
  \KwInput{$k$, $m$ arrays of length $n$}
  \KwOutput{Top $k$ values}
  result $\gets$ {\tt vector}\;
  fringe $\gets$ BonomialHeap\;
  fringe.insert($(1,1,\ldots,1), (X_{1,1} + X_{2,1} + \cdots + X_{m,1}))$ \;
  indices\tus in\tus fringe = set()\;
  \For{$i \gets 1$ to $k$}
    {
      popped\tus item $\gets$ fringe.pop()\;
      result.append(popped\tus item.key)\;
      \For{$j \gets 1$ to $m$}
      {
        $I \gets$ popped\tus item.index\;
        $I_j \gets I_j+1$\;
        \If {\text{$I$ not in indices\tus in\tus fringe}}
        {
          new\tus key = popped\tus item.key $+(X_{j,I_j} - X_{j,I_j-1})$\;
          fringe.insert(I, new\tus key)\;
          indices\tus in\tus fringe.insert(I)\;
        }
      }
    }
    \Return results\;
\caption{Select}\label{alg:sorttensor-select}
\end{algorithm}

\subsection*{SortTree: A sorting-based balanced binary tree}
In lemma~\ref{thm:a-plus-b-leq-k-plus-1} we saw that it is not
necessary to generate more than $k+1$ values total from the left and
right children combined; however, it is not trivial to discern how many
values we must generate from each child. One approach to resolve this
is to keep the $A$, $B$, and $A+B$ values in sorted order, generating
one at a time. Furthermore, it is not necessary for $A$ and $B$ to be
sorted vectors; instead, it is sufficient that $A$ and $B$ simply be
heap data structures, from which we can repeatedly request the next
smallest value.

The SortTree method runs similarly to the two-dimensional case: the
top-left corner of $A+B$ is computed via the minimal value of $A$ and
the minimal value of $B$. This inserts two values into the hull:
either the sum of the smallest value in $A$ and the second-smallest
value in $B$ or the sum of the second-smallest value in $A$ and the
smallest value in $B$. Neither the full, sorted contents of $A$ nor
the full, sorted contents of $B$ are needed.

We thus construct a balanced binary tree of these heap-like
structures. Except for the leaves (which are binary heaps of the $m$
arrays of length $n$), each heap-like structure is of the form $A+B$,
where $A$ and $B$ are heap-like structures
(figure~\ref{fig:sort-tree-cartoon}).

\begin{figure}
\centering
\includegraphics[width=.7\textwidth]{./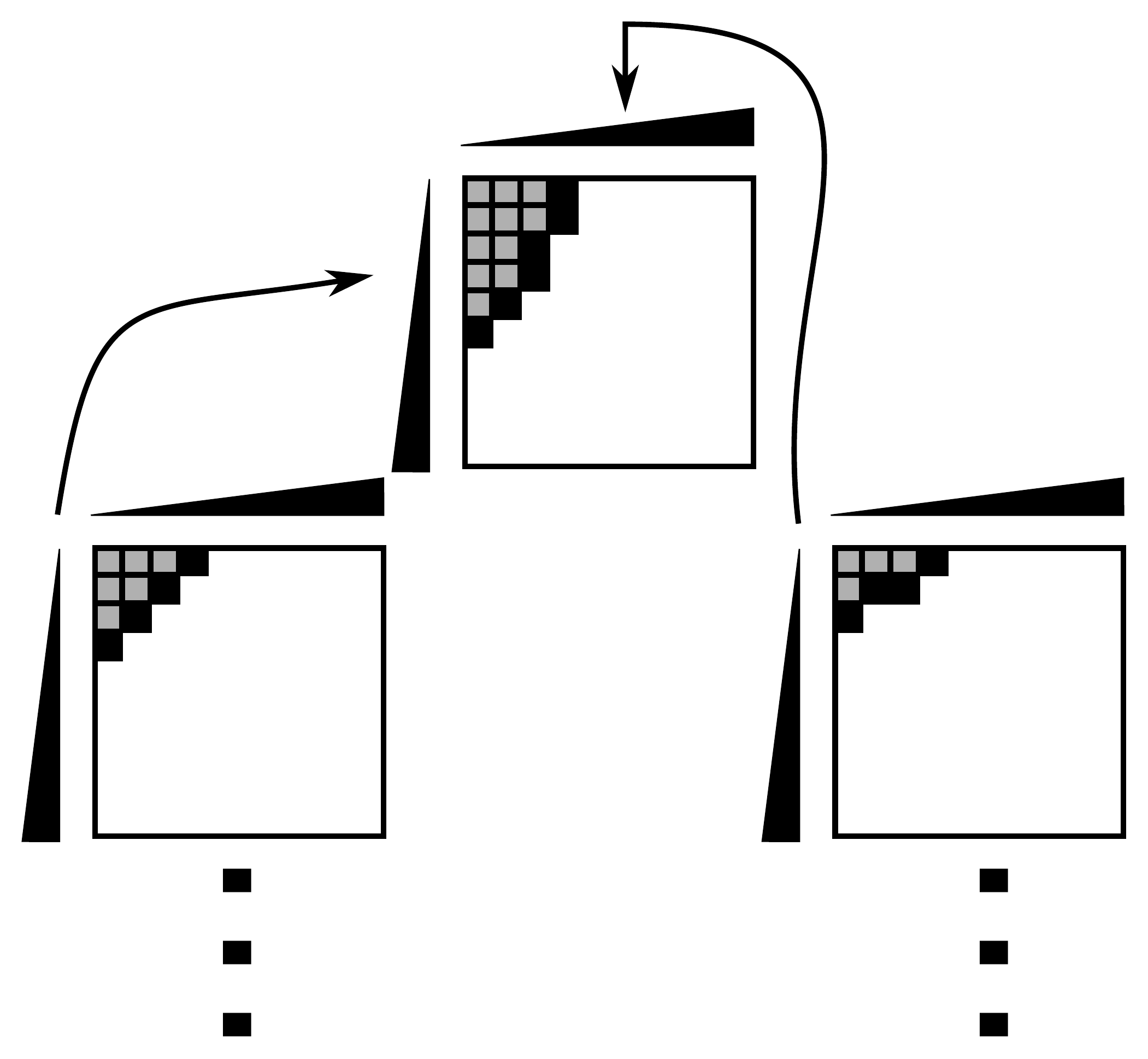}
 \caption{{\bf Illustration of the SortTree method.} Problems of the
   form $A+B$ are assembled into a balanced binary tree. The gray
   squares in each 2D matrix represent values which have already been
   popped from a heap of $A+B$ at the request of a parent node of
   $A+B$. When a value in a child is popped at the request of a
   parent, it advances the corresponding margin along one axis of the
   parent's matrix. The black squares are values in the hull, but which
   have not yet been popped. The child on the left has popped six
   values and currently has four values in its hull; the row axis of
   the parent has six values that have been realized thus far so
   $f_{vertical}=6$. The child on the right has popped four values and
   currently has four values in its hull; the column axis of the
   parent has four values that have been realized thus far so
   $f_{horizontal}=4$.
   \label{fig:sort-tree-cartoon}}
\end{figure}

For a node in the tree we call the set of values which are available
along each sorted axis the ``margin'' (these correspond to the gray
boxes in the child node corresponding to that axis in
Figure~\ref{fig:sort-tree-cartoon}). The size of the vertical margin,
$f_v$, is the largest sorted index requested from the left child (in
the figure, $f_v=6$). Similarly, the size of the horizontal margin,
$f_h$, is the largest sorted index requested from the right child (in
the figure, $f_h=4$).

The only time a parent node needs to request the next smallest value
from a child node is when the value of either $f_h$ or $f_v$ is
incremented. We have an invariant: after the first iteration, at any
node, the margin of at most one axis is incremented at a time. We
demonstrate by contradiction: To advance both margins simultaneously,
we have to pop cell $(f_v,f_h)$ from the hull's heap and select its
value as the next smallest. This adds indices $(f_v+1,f_h)$ and
$(f_v,f_h+1)$ to the hull, which would require popping from both the
left and right children (\emph{i.e.}, advancing both margins);
however, for $(f_v,f_h)$ to be in the hull, either $(f_v-1,f_h)$ or
$(f_v,f_h-1)$ must have already been popped. These pop operations
would have pushed either $(f_v,f_h)$ and $(f_v-1,f_h+1)$ or
$(f_v,f_h)$ and $(f_v+1,f_h-1)$ , respectively. Either of these would
have advanced the vertical or horizontal margins, making it impossible
for popping cell $(f_v,f_h)$ to advance both margins
simultaneously. The exception is when both $f_v-1$ is not a valid row
and and $f_h-1$ is not a valid column; in that case, we have
$f_v=f_h=1$, and thus both margins can be advanced only in the first
iteration.

From this invariant, we see that we will asymptotically propagate only
up to one child per iteration. The constant per-node cost of the first
iteration in all nodes (where we may visit both children) can be
amortized out during construction of the balanced binary tree: there
are $O(m)$ nodes in the tree. In a balanced binary tree, the longest
path from root to leaves that includes only one child per node is
simply a path from the root to a leaf, and thus propagates to
$\leq\log(m)+1$ nodes.

Note that the selection of the minimal $k$ values in $X_1+X_2+\cdots +
X_m$ will be reported in sorted order.

The $m$-dimensional indices from which these minimal $k$ values were
drawn can be optionally computed via depth-first search in $O(k\cdot
m)$ total.

\subsubsection*{Space}
The space needed to store the input vectors is $O(m\cdot n)$. From the
invariant above, in each iteration, only $O(\log(m))$ nodes will be
visited and they will add $O(1)$ new values into some hull; thus, the
space of all the hulls combined grows by $\log(m)$ per iteration. The
space usage after loading the data is $O(k\log(m))$. For
$k \leq \frac{m\cdot n}{\log(m)}$ the space usage is
$\in O(m\cdot n)$, which is the space required just to store the
original data.

If the $m$-dimensional indices are to be calculated and stored this
adds $O(k\cdot m)$ storage space to become $O(k\cdot m +
k\log(m))$. If $k \leq n$ then the space to store the indices is
$\in O(n\cdot m)$, which is the space to store the original data.

\subsubsection*{Runtime}
Initially, the input vectors are heapified which costs
$\in O(n\cdot m)$. Each hull starts with just the minimal possible
value. The first time a value is popped from the hull, two more values
will be placed into the hull. For every subsequent pop, only one new
value will be placed in to the hull, keeping the size of each hull
$\in O(k)$; therefore, the time to pop from any hull is
$\in O(\log(k))$.

Since both horizontal and vertical margins of a node cannot advance
simultaneously (except for the very first value, which is amortized
into the cost of tree construction), at most one child of each parent
will have to pop from its hull so there will be $O(\log(m))$ pops per
iteration. In the second iteration, the hulls will be of size 2 so
the time to pop from all hulls is $\in O(\log(2)\log(m))$, similarly,
for the third iteration the time to pop from all hulls is $\in
O(\log(3)\log(m))$, \emph{etc}. After $k$ iterations the total time
spent popping from hulls is $\in O(\log(k!)\log(m)) =
O(k\log(k)\log(m))$.

After loading the original data, the runtime of the SortTree method is
$O(n\cdot m + k\log(k)\log(m))$. If the $m$-dimensional indices are to
be calculated and stored (each length-$m$ tuple index is computed via
an in-order traversal of the tree, which costs $O(m)$ per iteration),
we add $O(k\cdot m)$ to the time to become
$O(k\log(k)\log(m) + k\cdot m)$.

If the distribution on the $X_i$'s lead to a fixed probability, $p <
1$, of advancing the margin in either child in the tree, then the
probability of descending to layer $\ell$ of the tree is $\in
O(p^{-\ell})$. Thus for such inputs, the expected runtime is a
geometric series. In that case, the $\log(m)$ coefficient disappears
and the total runtime without calculating the indices is $\in O(n\cdot
m + k\log(k))$ and total runtime with calculating the indices is $\in
O(n\cdot m + k\log(k) + k\cdot m)$.

\subsubsection*{Implementation}
A pseudocode implementation of the selection function for SortTree is
provided in Algorithms~\ref{alg:sorttree-select} and\ref{alg:pariwiseselectnode}.

\begin{algorithm}[H]
\DontPrintSemicolon
  \KwInput{$k$, input\tus arrays}
  \KwOutput{Top $k$ values}
  root $\gets$ build\tus tree(input\tus arrays) \tcp*{returns pairwise selection  object}
  results $\gets$ {\tt vector}\;
  \For{$i \gets 1$ to $k$}
  {
    popped\tus val, popped\tus index $\gets$ root.pop\tus min\tus value\tus and\tus index()\;  
    results.append(popped\tus val)\;
  }
  \Return results\;
\caption{Select}\label{alg:sorttree-select}
\end{algorithm}

\begin{algorithm}[H]
\DontPrintSemicolon
  \KwOutput{Min key and index in fringe}
  popped\tus val, popped\tus index $\gets$ fringe.pop()\;  
  insert\tus neighbors\tus into\tus fringe(popped\tus index)\; 
  generate\tus new\tus value\tus from\tus left\tus child\tus if\tus necessary()\;
  generate\tus new\tus value\tus from\tus right\tus child\tus if\tus necessary()\;
  insert\tus into\tus fringe(popped\tus index$_i+1$, popped\tus index$_j$)\;
  insert\tus into\tus fringe(popped\tus index$_i$, popped\tus index$_j+1$)\;
  generated\tus values.append((popped\tus val, popped\tus index))\;
  \Return (popped\tus val, popped\tus index)\;
\caption{PairwiseSelectNode::pop\tus min\tus value\tus and\tus index}\label{alg:pariwiseselectnode}
\end{algorithm}

\subsection*{FastSoftTree: an improved soft heap-based balanced binary tree method}
FastSoftTree works in a very similar way to SoftTree; however, where
every parent node in the SoftTree asks their children for exactly $k$
values each, a node in the FastSoftTree asks their children to
generate values only when needed by the parent to perform selection on
$A+B$. The $A+B$ selection still uses {\tt Soft-Select} to eschew the
$\Omega(n\log(n))$ bound required for sorting; however, it is this
on-the-fly generation of values from the children nodes that allows
FastSoftTree to achieve a runtime in $o(k\cdot m)$ (after tree
construction).

In the tree, the work cycle of a parent and its two children will
consist of three main steps. Let $C$ be the data produced by the
parent such that $C=A+B$. First, the parent will be asked to generate
values (possibly multiple times) so that its parent may perform a
selection. Second, the parent will ask its children to generate more
values to ensure that it has access to the values necessary for
selection on $A+B$. Third, the parent will then perform a slightly
modified {\tt Soft-Select} on $A+B$. The magic with FastSoftTree is
that all pairs of siblings will produce $\leq k+1$ values combined
without
sorting.  

\subsubsection*{Tree construction}
As with SoftTree, a finite, balanced binary tree is built which
combines two separate types of nodes. Each of the $m$ leaves in the
tree consists of a small wrapper around an individual input array. All
nodes above the leaves (including the root) are what we call $A+B$
nodes, as they use {\tt Soft-Select} to perform selection on the
Cartesian product of data generated by their children. Regardless of
whether a node is a leaf or an $A+B$ node, it has an output array
whose values are accessible only by their direct parent.

Unlike in the standard {\tt Soft-Select} algorithm, the input arrays
are not made into binary heaps, instead they are made into a slightly
stricter form of a heap structure: the layer-ordered heap (LOH)
(definition~\ref{thm:layer-ordered-heap}). However, LOHs are still
less informative than sorting as they can be constructed in $O(n)$ and
perhaps more surprisingly, are actually useful.

\subsubsection*{Layer-ordered heaps}
\begin{definition}
  \label{thm:layer-ordered-heap}
  A layer-ordered heap of rank $\alpha$ is a stricter form of heap
  structure which is segmented by layers with the $i$ layer denoted
  $L_i$.  An individual layer is defined to be all nodes which are the
  same number of generations removed from the root. The size of $L_i$
  is denoted $c_i$ and the ratio of the sizes of the layers
  $\frac{c_{i+1}}{c_i}$ tends to $\alpha$ as $i$ tends to infinity.
  Where in a standard heap, $\forall y \in children(x),~ x\leq y$, in
  a layer-ordered heap, $max(L_i) \leq min(L_{i+1})$ which we denote
  as $L_i \leq L_{i+1}$.
\end{definition}

A LOH may be constructed by performing partitions at a predetermined
set of pivot indices. The $i^{th}$ pivot index marks the first
value in layer $i$ and is calculated by
$p_i = max(p_i+c_i, \lceil \sum_{j=1}^i \alpha^j\rceil)$ with
$p_1 = 1, p_2=2$. It is necessary that $c_{i+1} \geq c_{i}$ so that
every node in layer $L_i$ has at least one child in $L_{i+1}$.

If $\alpha \leq 1$ then all layer sizes will be one, in which case
layer-ordering will sort the entire LOH. To avoid sorting, it is
enforced that $\alpha > 1$. If $\alpha<2$, indexing of children is
slightly more complicated than for a binary heap (where the children
of index $j$ are found at indices $2j,2j+1$).  The index of a node
relative to the start of the layer to which it belongs is its
``offset.'' The number of nodes in layer $i$ with one child will be
$d_{i,1} = 2c_i - c_{i+1}$, and the number of nodes with two children
$d_{i,2}$ will satisfy $d_{i,1} + d_{i,2} = c_{i}$, allowing us to
determine how many (and which) nodes in a particular layer have two
children (or one child). If a node in layer $i$ at offset $j$ has two
children, those children are at offsets $2j, 2j+1$ in the next
layer. Otherwise, the single child of a node at layer $i$ and at
offset $j$ will occur in the next layer at an offset of
$j - d_{i,2} + 2 d_{i,2} = j + d_{i,2}$. Rather than attempt to
compute closed forms, these values are cached by using the fact that
we will never visit layer $i$ before visiting every offset from every
layer to come before it.

A LOH may be generated online or it may be constructed \emph{in situ}
via an array. In the latter case, given $\alpha\in(1,2)$,
layer-ordered heapification (LOHification) of an array of length $n$
can be construction in
$O\left(n\log(\frac{1}{\alpha-1})\right) = O(n)$ since $\alpha$ is a
constant\cite{pennington:optimal}. Note that the choice of $\alpha$
may introduce a significant constant. This will be discussed below.

All nodes in the tree are considered to be ``layer-ordered heap
generators'', meaning the values their parents access are stored as a
LOH. For a leaf, this means the input array is LOHified upon
construction of the tree. Since this is done once, the cost is
amortized into the cost of loading the data. Nodes above the leaves
generate their LOHs one entire layer at a time. Since all nodes are
LOH generators, they inherit from the same base class. This means the
$A+B$ nodes do not know whether their child is another $A+B$ node or a
leaf node, they only know if the values in a layer of their children's
LOH are accessible or not.

During construction of the tree, each node will generate the first
layer of their respective LOH, which is a single value. This cost can
be amortized into the creation of the tree as it only requires the
parent to access the minimum element of its children which will be
$O(1)$ time after the leaves are initially LOHified.  When it is said
that a node generates a layer, this is done differently for a leaf
node versus an $A+B$ node. A leaf creates its entire LOH at
construction, so generating a new layer simply means to let the parent
access the next layer. For an $A+B$ node, generating a new layer
requires a selection on $A+B$.

\subsection*{The generation of an $A+B$ node's LOH}
When a parent is requested to generate a new layer, it must perform a
$k$-selection on $A+B$. This selection can only be done if the parent
has access to the requisite values from $A$ and $B$ (\emph{i.e.} the
children generated the necessary layers), but there is no way of
knowing which values are necessary to perform the selection \emph{a
  priori}.  

\subsubsection*{Concatenation selection}
Here, we show a means of finding a bounds on $a$ and $b$, the number
of values needed from $A$ and $B$, respectively, can be determined by
performing selection on the concatenation
$A|B = A_1, A_2, \ldots, B_1, B_2, \ldots$.

\begin{lemma}
  \label{thm:reduction-to-selection-on-concatenation}
  Let $a$ be the number of terms of $A$ required to produce the
  minimal $k$ values of the form $A_i+B_j$. Let $b$ be the number of
  terms of $B$ required to produce the minimal $k$ values of the form
  $A_i+B_j$. Bounds on $a$ and $b$ with $a+b\leq k+1$ can be found via
  a $(k+1)$-selection on $A|B$.
\end{lemma}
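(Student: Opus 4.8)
The plan is to reduce the question of \emph{how many} terms are needed from each array to a single one-dimensional selection, by showing that membership of a term in the minimal-$k$ set is governed by one common threshold. First I would adopt the ascending view of $A$ and $B$ used in Lemma~\ref{thm:a-plus-b-leq-k-plus-1} and rewrite every Cartesian sum as $A_i+B_j = (A_1+B_1) + (A_i-A_1) + (B_j-B_1)$. Letting $w_k$ denote the $k$-th smallest value of $A+B$ shifted by the base $A_1+B_1$, the term $A_i$ participates in the minimal $k$ sums exactly when its cheapest companion sum $A_i+B_1$ is at or below the $k$-th smallest, i.e. when $A_i-A_1 \le w_k$; symmetrically $B_j$ participates exactly when $B_j-B_1 \le w_k$. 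Hence $a = |\{i : A_i-A_1 \le w_k\}|$ and $b = |\{j : B_j-B_1 \le w_k\}|$, with both counts now measured against the \emph{same} cutoff $w_k$.

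Next I would form the concatenation of the two offset sequences $A_i-A_1$ and $B_j-B_1$ and observe that $a+b$ is precisely the number of entries of this concatenation that are $\le w_k$. By Lemma~\ref{thm:a-plus-b-leq-k-plus-1} we already know $a+b \le k+1$, so these $a+b$ entries are exactly the smallest $a+b \le k+1$ elements of the concatenation. Consequently a $(k+1)$-selection on the concatenation is guaranteed to contain every offset term that is $\le w_k$, that is, all of $A_1,\dots,A_a$ and all of $B_1,\dots,B_b$. Reading off how many of the $k+1$ selected entries come from each side yields counts $a'$ and $b'$ with $a' \ge a$, $b' \ge b$, and $a'+b' = k+1$; requesting $a'$ values from $A$ and $b'$ from $B$ therefore supplies a sufficient, total-bounded set of terms from which the pairwise selection on $A+B$ is correct.

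The step I expect to be the main obstacle is exactly this alignment of the two sides. On the \emph{raw} concatenation $A_1,A_2,\dots,B_1,B_2,\dots$, a single selection threshold does not simultaneously control $a$ and $b$: the effective cutoff is $w_k$ measured from $A_1$ on the $A$-side but from $B_1$ on the $B$-side, so when $A_1 \ne B_1$ a small element of one array can crowd a needed element of the other out of the top $k+1$. Subtracting the per-array minima removes this discrepancy and is what makes the single-selection argument valid, after which Lemma~\ref{thm:a-plus-b-leq-k-plus-1} forces the total to be $\le k+1$. The only remaining care is tie-breaking at the boundary value $w_k$, which I would handle with the same convention as in Lemma~\ref{thm:a-plus-b-leq-k-plus-1} (counting the border sums $A_i+B_1$ and $A_1+B_j$ as members of the minimal set), so that $a$ and $b$ agree with the number of offset terms at or below the cutoff.
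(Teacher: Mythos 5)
Your proposal is correct and takes essentially the same route as the paper: both arguments hinge on shifting each array by its own minimum (the paper's $\gamma=-A_1-B_1$, your offsets $A_i-A_1$, $B_j-B_1$), so that the cheapest companion sums $A_i+B_1$ and $A_1+B_j$ coincide with the entries of the shifted concatenation, after which a $(k+1)$-selection captures every term that can participate, with the total count bounded by $k+1$ via Lemma~\ref{thm:a-plus-b-leq-k-plus-1}. Your threshold formulation with $w_k$ is a repackaging of the paper's containment/selection-identity chain (and is subject to the same tie-breaking caveats, which both you and the paper resolve by the tuple convention of Lemma~\ref{thm:a-plus-b-leq-k-plus-1}), and your closing observation that the raw concatenation fails without the shift is precisely the paper's key insight.
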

\begin{proof}
  Though $A$ and $B$ will not be sorted, for the ease of this proof
  assume they are indexed in ascending order: $A_i\leq A_{i+1}$ and
  $B_i\leq B_{i+1}$.
 
  Let $select(A+B,k)$ denote the minimal $k$ terms of the form
  $A_i+B_j$. It is trivial to show that for any finite scalar
  $\gamma$, selection on $A+B$ is bijective to selection on
  $A+B+\gamma$: $select(A+B,k) = select(A+B+\gamma,k)-\gamma$.

  Let $\gamma = -A_1-B_1$, then
  $select(A+B+\gamma,k) = select(A+B-(A_1+B_1),k) = select(A'+B',k)$
  where $A'_i = A_i-A_1$ and $B'_j = B_j-B_1$. Let every
  $A'_1, A'_2, \ldots A'_a$ and $B'_1, B'_2, \ldots B'_b$ be used at
  least once in some $A'_i+B'_j$ in $select(A'+B',k)$, where $a$ and
  $b$ are unknown. 

  Since all $A'_i,B'_j $ for $~i\in\{1,\ldots,a\},~j\in\{1,\ldots,b\}$
  are used in $select(A'+B',k)$, and for any $i>1,j>1$,
  $A'_i + B'_j \geq A'_i+B'_1, A'_1+B'_j$ it must be that
  $\{A'_1+B'_1, A'_2+B'_1,\ldots,A'_a+B'_1,
  A'_1+B'_2,A'_1+B'_3,\ldots,A'_1+B'_b\} \subseteq select(A'+B',k)$.
  Let
  $T = select(A'_1+B'_1, A'_2+B'_1,\ldots,A'_a+B'_1,
  A'_1+B'_2,A'_1+B'_3,\ldots,A'_1+B'_b,k)$.  By definition,
  $A'_1=B'_1=0$ and so
  $T = select(A'_1, A'_2,\ldots,A'_a,B'_2,B'_3,\ldots,B'_b,k)$. $A'_1$
  may be removed from the selection as it will always be in the select
  (since $A'_1=B'_1=0$), then
  $T=(A'_1,B'_1)|select(A'_2,A'_3,\ldots,A'_a, B'_2, \ldots,
  B'_b,k-1)$ which means
  $select(A'_1+B'_1, A'_2+B'_1,\ldots,A'_a+B'_1,
  A'_1+B'_2,A'_1+B'_3,\ldots,A'_1+B'_b,k) =
  (A'_1,B'_1)|select(A'_2,A'_3,\ldots,A'_a, B'_1, B'_2, \ldots,
  B'_b,k-1) = select(A'|B',k+1) = select(A|B + \gamma, k)$. This shows
  that the $(k+1)$-selection on $A'|B'$ returns all of the values
  from $A$ and $B$ which are possibly in $select(A+B,k)$.

  Thus bounds $s$ and $t$ can be found such that $a\leq s$ and
  $b\leq t$ with $s+t \leq k +1$ via a $(k+1)$-selection on $A'|B'$.
\[
 \qedhere
\]
\end{proof}

Now it is necessary to efficiently perform selection on $A|B$;
however, this is more difficult than it may seem because in a tree
where each node represents a problem of the form $A+B$, generating
every term in $A$ or $B$ will be combinatorial and thus could be
exponentially difficult. For this reason, we must perform the
selection on $A|B$ without generating every term in $A$ or $B$.

One way to perform selection on $A|B$ without generating many
unnecessary values would be to generate values in sorted order in a
manner reminiscent of the SortTree method; however, as we saw above,
sorting cannot achieve a runtime in $o(k\log(k))$. A heap order of $A$
or $B$ is less strict than computing $A$ or $B$ in sorted order;
however, the minimal $k$ values in a heap will not necessarily be
confined to any location in the heap. Thus, a heap is insufficient to
perform selection on $A|B$ while still generating few terms from both
$A$ and $B$ together. This is why LOHs are so critical to this
algorithm: they allow us to perform this concatenation select without
generating too many values
(theorem~\ref{thm:selection-by-generating-alpha-squared-k-terms}). Furthermore,
this method generalizes to perform online selection on $A|B$ so that
successively larger selections $k_1< k_2< k_3< \cdots $ can be
performed with runtime $O(\alpha^2 \cdot (k_1 + k_2 + k_3 + \cdots))$,
while asymptotically generating at most
$\alpha^2\cdot (k_1 + k_2 + k_3 + \cdots) = \alpha^2\cdot k$ terms of
both $A$ and $B$ combined (corollary~\ref{thm:online-selection}).

\begin{theorem}
  \label{thm:selection-by-generating-alpha-squared-k-terms}
  Consider two layer-ordered heaps $A$ and $B$ of rank $\alpha$, whose
  layers are generated dynamically an entire layer at a time (smallest
  layers first). Selection on the concatenation of $A,B$, can be
  performed by generating at most $\approx \alpha^2\cdot k$ values of
  $A$ and $B$ combined.
\end{theorem}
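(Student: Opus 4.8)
The plan is to exhibit a concrete layer-generation procedure and then bound its output by two compounding factors of $\alpha$. Throughout the analysis I treat $A$ and $B$ as sorted (their layer-ordering already forces $L_i\le L_{i+1}$, so this only fixes an arbitrary order \emph{within} each layer) and write $c_1^A,c_2^A,\ldots$ and $c_1^B,c_2^B,\ldots$ for the layer sizes, recalling $c_{i+1}/c_i\to\alpha$. The procedure repeatedly generates the next unrevealed layer from whichever of $A,B$ currently has the smaller maximum revealed value, and stops once at least $k$ values are revealed and the $k$-th smallest revealed value $v_k$ satisfies $v_k\le\max(L_{s_A}^A)$ and $v_k\le\max(L_{s_B}^B)$, where $s_A,s_B$ index the last layers revealed in $A$ and $B$. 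The first step is a certification claim: when the stopping condition holds, the smallest $k$ revealed values are exactly the smallest $k$ of $A|B$. This is immediate from layer-ordering, since any unrevealed value of $A$ lies beyond $L_{s_A}^A$ and is therefore $\ge\min(L_{s_A+1}^A)\ge\max(L_{s_A}^A)\ge v_k$, and symmetrically for $B$, so no hidden value can displace one already in hand.

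The second step bounds $s_A$ and $s_B$. Let $a$ (resp. $b$) be the number of values of $A$ (resp. $B$) among the true smallest $k$ of $A|B$, so $a+b=k$, and let $s_A^0$ be the smallest index with $\sum_{i\le s_A^0}c_i^A\ge a$. I would argue that the procedure reveals at most one layer of $A$ beyond $L_{s_A^0}^A$: if $\max(L_{s_A^0}^A)\ge v_k$ we stop at $s_A^0$; otherwise every value in layers $1,\ldots,s_A^0$ is $<v_k$, which forces $\sum_{i\le s_A^0}c_i^A=a$, so the $(a+1)$-st smallest value of $A$ begins layer $s_A^0+1$ and, being outside the answer, exceeds $v_k$. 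Hence $\min(L_{s_A^0+1}^A)>v_k$, and revealing that single further layer certifies $A$. Thus $s_A\le s_A^0+1$, and identically $s_B\le s_B^0+1$.

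The third step is the geometric accounting that produces the exponent $2$. From $\sum_{i\le s_A^0-1}c_i^A<a$ together with $c_{i+1}/c_i\to\alpha$ one gets $c_{s_A^0}^A\lesssim(\alpha-1)a$ and hence $\sum_{i\le s_A^0}c_i^A\lesssim\alpha a$; the single extra layer adds $c_{s_A^0+1}^A\approx\alpha\,c_{s_A^0}^A\lesssim\alpha(\alpha-1)a$, so the values revealed from $A$ total $\lesssim\alpha a+\alpha(\alpha-1)a=\alpha^2 a$. Adding the analogous bound for $B$ gives $\lesssim\alpha^2(a+b)=\alpha^2 k$ values revealed overall, which is the claim.

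I expect the main obstacle to be the second step: pinning down that certification costs at most one extra layer per heap. The delicate point is that $\max(L_{s_A^0}^A)$ can fall below $v_k$, because $B$-values may interleave between $A$'s answer-values and its first non-answer value, so merely counting the layers that \emph{contain} the answer is not enough to stop; the argument must invoke layer-ordering to show the next layer's minimum already exceeds $v_k$. A secondary nuisance is that $c_{i+1}/c_i$ only \emph{tends} to $\alpha$, so every ``$\lesssim$'' and ``$\approx$'' above carries lower-order corrections that must be folded into the ``$\approx\alpha^2 k$'' of the statement rather than yielding an exact inequality; making these estimates uniform for the small layer indices, where the ratio has not yet settled, is where I would be most careful.
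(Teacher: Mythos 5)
Your outline is correct, and although it runs exactly the paper's generation rule (always reveal the next layer of whichever heap has the smaller last revealed maximum), the analysis takes a genuinely different route. The paper works in two phases: generate until the revealed count first reaches $k$, bound that count by $u<\alpha\cdot k$ via the geometric-overshoot argument, and then establish correctness by a three-way case analysis on $\max(A_x)$ versus $\max(B_y)$, concluding that any missing answer values can only displace the trailing layer of one heap, so at most $\approx\alpha\cdot c_x$ (or $\alpha\cdot c_y$) further values are needed; the two $\alpha$ factors compound globally against $u'<k$. You instead stop on a certification condition ($v_k$ at most both last-layer maxima), which makes correctness immediate with no case analysis, and you do the accounting per heap: with $a+b=k$ the split of the true answer between $A$ and $B$, each heap reveals at most one layer past the minimal prefix covering its own share, and the per-heap geometric bound $\lesssim\alpha a+\alpha(\alpha-1)a=\alpha^2 a$ sums to $\alpha^2 k$. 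Your decomposition buys a symmetric, case-free count, and your adaptive stopping rule is essentially the modification the paper defers to corollary~\ref{thm:online-selection} to make the selection online, so your argument covers that extension for free; the paper's phase structure, in turn, maps directly onto its implementation (Algorithm~\ref{alg:concatenationselect}) and onto the explicit ``at most $c_x$ extra values'' bound it reuses there. The two points you flag do need to be nailed down, but they close the way you anticipate: the one-extra-layer claim holds because once the maximum of $A$'s last revealed layer reaches the true $k$-th value $v^*$, any state in which the greedy rule would pick $A$ again is one where $B$'s last-layer maximum is also $\geq v^*$, hence (layer-ordering) every answer value is already revealed and the stopping condition has already fired; and the slack from ties and from $c_{i+1}/c_i$ only tending to $\alpha$ is absorbed by the theorem's ``$\approx$'' at the same level of rigor the paper itself employs.
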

\begin{proof}
  Begin by generating the first layers, $A_1$ and $B_1$. If
  $\max(A_1)\leq \max(B_1)$, generate $A_2$; otherwise, generate
  $B_2$. Proceed iteratively in this manner: if
  $A_1, A_2, \ldots, A_x$ and $B_1, B_2, \ldots, B_y$ are extant
  layers, then generate layer $A_{x+1}$ if $\max(A_x)\leq \max(B_y)$;
  otherwise, generate layer $B_{y+1}$. Continue in this fashion until
  $u=c_1+c_2+\cdots +c_x + c_1+c_2+\cdots +c_y \geq k$.

  W.l.o.g., let $A_x$ be the layer whose generation resulted in
  $u\geq k$, breaking the iterative process described above. Before
  generating $A_x$, we must have $u<k$ or we would have terminated at
  the end of the previous iteration. Only one new layer, $A_x$, is
  generated. Thus we have
  $u'=c_1+c_2+\cdots +c_{x-1} + c_1+c_2+\cdots +c_y<k$ and
  $u=c_1+c_2+\cdots +c_x + c_1+c_2+\cdots +c_y\geq k$. The magnitude
  of $u$ compared to $u'$ is
  $\frac{u}{u'} = \frac{u' + c_x}{u'}=1+\frac{c_x}{u'}<1+\frac{c_x}{c_1+c_2+\cdots
    +c_{x-1}}$. Because of the iterated ceilings,
  $c_x \leq \ceil*{\ceil*{\alpha}\cdot \alpha}\ldots \leq
  ((\alpha+1)\cdot\alpha+1)\cdot\alpha+\cdots=\alpha^x+\alpha^{x-1}+\cdots+\alpha+1=\frac{\alpha^x-1}{\alpha-1}$. From
  this we have
  \[
  \frac{u}{u'}<\frac{\frac{\alpha^x-1}{\alpha-1}}{c_1+c_2+\cdots +c_{x-1}}<\frac{\frac{\alpha^x-1}{\alpha-1}}{\sum_{\ell=0}^{x-2}\alpha^\ell}=\frac{\frac{\alpha^x-1}{\alpha-1}}{\frac{\alpha^{x-1}-1}{\alpha-1}}=\frac{\alpha^x-1}{\alpha^{x-1}-1} \approx \alpha
  \]
  for large problems. $u'<k$, so $u<\frac{u}{u'}\cdot k$, and thus
  $u<\alpha\cdot k$.

  By the stopping criteria, there are at least $k$ values from
  $A_1, A_2, \ldots, A_x, B_1, B_2, \ldots, B_y$ that have been
  generated. Furthermore, by the layer-ordered property,
  $A_1, A_2, \ldots A_x, B_1, B_2, \ldots B_y \leq \max(\max(A_x),
  \max(B_y))$. Thus, $\max(\max(A_x), \max(B_y))$ is an upper bound on
  the partitioning value of the $k$-selection on $A|B$; however, we
  are not yet guaranteed to have the best $k$ values.  There are three
  cases that must be examined based on the relationship between
  $\max(A_x)$ and $\max(B_y)$.

  First, let's consider the case where $\max(A_x)<\max(B_y)$. Since
  $\max(B_y)$ is an upper bound on the partitioning value of the
  $k$-selection on $A|B$ and since the layer-ordered property dictates
  $\max(B_y)\leq B_{y+1} \leq B_{y+2}\leq \cdots$, no values in
  $B_{y+1}, B_{y+2}, \ldots$ will reach the $k$-selection of
  $A|B$. Thus, any values in the $k$-selection of $A|B$ that have not
  been generated may only be from $A_{x+1}, A_{x+2}, \ldots$. Because
  $A_x$ was generated after $B_y$, the state of the layers when $B_y$
  was generated were of the form
  \[
  \begin{array}{ccccccc}
    A_1 &\leq& A_2 &\leq& \cdots &\leq& A_s\\
    B_1 &\leq& B_2 &\leq& \cdots &\leq& B_{y-1},
  \end{array}
  \]
  where $s<x$. Since $B_y$ was generated before $A_{s+1}$,
  $\max(B_{y-1})<\max(A_s)$. Combined with the layer-ordered property,
  we get
  $B_1 \leq \cdots \leq B_{y-2} \leq B_{y-1} \leq \max(B_{y-1}) <
  \max(A_s) \leq A_{s+1} \leq \cdots \leq A_x \leq A_{x+1} \leq
  \cdots$. Thus, although the $k$-selection may require more values
  generated from $A_{x+1}, A_{x+2}, \ldots$, these values cannot
  displace any values from $B_1, B_2, \ldots B_{y-1}$; therefore, all
  values in $B_1, B_2, \ldots B_{y-1}$ are already in the
  $k$-selection. By the layer-ordered property, values generated from
  $A_{x+1}, A_{x+2}, \ldots$ may likewise not displace any values from
  $A_1, A_2, \ldots A_x$; therefore, new values from
  $A_{x+1}, A_{x+2}, \ldots$ may only displace in the $k$-selection
  extant values from $B_y$. Thus, at most $c_y$ additional values
  generated from $A_{x+1}, A_{x+2}, \ldots$ may be required.

  In the case where $\max(B_y)<\max(A_x)$, $\max(A_x)$ is an upper
  bound on the partitioning value used for $k$-selection on $A|B$;
  thus, no values of $A_{x+1}, A_{x+2}, \ldots$ need be considered by
  the $k$-selection. In this case, we need only consider additional
  layers generated in $B_{y+1}, B_{y+2}, \ldots$. We can exploit the
  fact that $\max(A_{x-1})\leq \max(B_y)$ (resulting in the generation
  of $A_x$ that halted the process above): by the layer-ordered
  property,
  $A_1\leq A_2\leq \cdots\leq A_{x-1}\leq \max(A_{x-1})\leq
  \max(B_y)\leq B_{y+1}\leq B_{y+2}\leq \cdots$, and thus new values
  generated from $B_{y+1}, B_{y+2}, \ldots$ may not displace from the
  $k$-selection values from $A_1, A_2, \ldots A_{x-1}$; therefore, all
  values in $A_1, A_2, \ldots A_{x-1}$ are already in the
  $k$-selection of $A|B$. Because of the layer-ordered property,
  $B_1 \leq B_2 \leq \cdots \leq B_y \leq B_{y+1}\leq \cdots$, values
  generated from $B_{y+1}, B_{y+2}, \ldots$ may not displace values
  from $B_1, B_2, \ldots, B_y$; these values are likewise in the
  $k$-selection of $A|B$. The only values that can be displaced by
  $B_{y+1}, B_{y+2}, \ldots$ are from $A_x$. Thus, at most $c_x$
  additional values generated from $B_{y+1}, B_{y+2}, \ldots$ may be
  required.

  In the final case, $\max(A_x)=\max(B_y)$. Like the two cases before,
  $\max(\max(A_x),\max(B_y))$ is an upper bound on the value threshold
  for $k$-selection on $A|B$; therefore, no larger values need be
  fetched to perform the $k$-selection. By the layer-ordered property,
  $A_x\leq A_{x+1}\leq \cdots$ and $B_y\leq B_{y+1}\leq \cdots$, and
  so any further layers of neither $A$ nor $B$ need be generated: by
  the termination of the iterative process above, we already have at
  least $k$ values $\leq \max(\max(A_x),\max(B_y))$, and so no values
  smaller than $\leq \max(\max(A_x),\max(B_y))$ can exist in
  $A_{x+1}, A_{x+2}, \ldots, B_{y+1}, B_{y+2}, \ldots$; therefore, we
  already have all values necessary to perform the $k$ selection on
  $A|B$.

  In the first two cases where $\max(A_x)\neq\max(B_y)$, additional
  layers may be generated with the total number of values
  $\leq c_y$ or $c_x$. W.l.o.g., these values will be generated
  sequentially from some layers $B_t, B_{t+1}, \ldots B_{t+\ell}$ such
  that $c_t + c_{t+1} + \cdots c_{t+\ell} \geq c_x$. By the same
  reasoning as the computation on $\frac{u}{u'}$ above, on large
  problems, this series will overestimate $c_x$ by at most a factor of
  $\alpha$.

  The total number of values generated in both layer-ordered heaps was
  $u<\alpha\cdot k$. In the worst-case scenario, $c_x$ is as large as
  possible and $y$ must be as small as possible (\emph{i.e.}, $y=1$);
  therefore, $u = c_1 ~+~ c_1+c_2+\cdots+c_{x-1}+c_x$. After adding
  additional layers, we have generated at most $u + \alpha\cdot c_x$
  where $\frac{u + \alpha\cdot c_x}{u'}= \frac{u + \alpha\cdot
    (u-u')}{u'} \approx \alpha + \alpha\cdot (\alpha-1) = \alpha^2$,
  for large problems.
  \[
  \qedhere
  \]
\end{proof}

\subsubsection*{Pairwise selection at an $A+B$ node}
The pairwise selection on $A+B$ scheme used in the SoftTree approach
requires that $A$ and $B$ be arranged in a binary heap where
$A_i \leq A_{2i}, A_{2i+1}$. In that scheme, selection of a term
$A_i+B_j$ may insert candidates which contain
$A_{2i},A_{2i+1},B_{2i},B_{2i+1}$; however, as described above, this
scheme may generate exponentially many values of $A$ and $B$. With a
LOH, we know that a child's child is the inferior of a child's
sibling, relieving us of this fear.

We use the same proposal scheme as {\tt Soft-Select} with two
modifications. First, when an item is popped, the children proposed
come from cached values as explained above, not from the proposal
scheme in Kaplan \emph{et al.} Second, after $A$ and $B$ are generated
for the current selection (via the selection on $A|B$), any indices in
$A$ or $B$ which have not been generated are not inserted. Instead,
they are placed into a purgatory list, whose contents will only be
considered after $A$ or $B$ have generated a new layer in a subsequent
selection (again, during the selection on $A|B$, which is responsible
for adding layers to $A$ and $B$ in anticipation of a selection on
$A+B$). We use three purgatory lists: one to consider when $A$ has
generated a new layer, one to consider when $B$ has generated a new
layer, and one to consider only after $A$ and $B$ have both generated
new layers. Importantly, from the layer ordering, we know that the
children of some value from layer $i$ must be in layer $i+1$. Thus,
after the relevant LOH has generated a new layer, it must be that
either any $(i,j)$ index pair in the purgatory list is now accessible
or that all layers in the LOH have been generated and an out-of-bounds
$(i,j)$ index pair is never to be used.

\subsubsection*{Online pairwise selection at an $A+B$ node}
Theorem~\ref{thm:selection-by-generating-alpha-squared-k-terms} shows
that a single selection on $A|B$ will generated $O(k)$ many values,
but in order to dynamically generate layers efficiently we may need to
be able to do successive $k_1 < k_2 < \cdots$ selections while
generating only $O(k_1 + k_2 + \cdots)$ values.

\begin{corollary}
  \label{thm:online-selection}
  Consider two layer-ordered heaps $A$ and $B$ of rank $\alpha$, whose
  layers are generated dynamically an entire layer at a time (smallest
  layers first). Successive selections of
  $k_1 < k_2 < k_3 < \cdots$ on the concatenation of $A,B$
  can be performed in an online manner by generating at most
  $\alpha^2\cdot (k_1 + k_2 + k_3 +\cdots)$ values of $A$ and $B$
  combined.
\end{corollary}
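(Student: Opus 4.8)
The plan is to reduce the corollary to Theorem~\ref{thm:selection-by-generating-alpha-squared-k-terms} by exploiting two structural facts about the layer-generation procedure used in that proof: it is \emph{monotone} and it is essentially \emph{independent of $k$}. The greedy rule — compare $\max(A_x)$ with $\max(B_y)$ and generate the next layer of whichever side has the smaller maximum — never consults $k$; the target enters only through the stopping condition $u \geq k$. Consequently, for any $k_i < k_{i+1}$ the deterministic sequence of layers produced to reach $k_i$ is an initial segment of the sequence produced to reach $k_{i+1}$. I would therefore define the online algorithm to simply resume this one deterministic sequence of layer generations each time a new, larger target arrives, reusing every layer generated so far. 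With this definition the fact that future targets $k_{i+1}, k_{i+2}, \ldots$ are unknown is harmless: the generator makes exactly the same choices it would make with full foresight.

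Given monotonicity, the set of layers materialized after servicing $k_1 < k_2 < \cdots < k_N$ is precisely the set a single fresh selection with target $k_N = \max_i k_i$ would materialize. Applying Theorem~\ref{thm:selection-by-generating-alpha-squared-k-terms} to that single target bounds the total number of generated values of $A$ and $B$ combined by $\approx \alpha^2 \cdot k_N$, and since $k_N \leq k_1 + k_2 + \cdots + k_N$ this is at most $\alpha^2 \cdot (k_1 + k_2 + \cdots)$, which is the claimed bound. The same conclusion can be reached by a telescoping charging argument: the number of \emph{new} values generated while servicing $k_i$ is at most the total a fresh $k_i$-selection would generate, namely $\approx \alpha^2 k_i$ by the theorem, so summing over $i$ recovers $\alpha^2 \sum_i k_i$. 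This second phrasing is the one that also yields the runtime claim stated in the surrounding text, since each online selection re-runs a linear-time selection on the currently available $O(\alpha^2 k_i)$ values, for a total of $O\!\left(\alpha^2 \sum_i k_i\right)$.

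The one point requiring genuine care — and the step I expect to be the main obstacle — is the boundary phase of Theorem~\ref{thm:selection-by-generating-alpha-squared-k-terms}, where after reaching $u \geq k$ the procedure generates a few additional layers from a single side (contributing at most $c_y$ or $c_x$ extra values) to guarantee correctness of the selection. In the online setting these boundary layers are produced at \emph{every} intermediate target, so I must verify that they remain consistent with resuming the greedy staircase toward the next target, rather than causing the online layer set to diverge from, or over-generate relative to, the single-shot set used in the bound above. The argument I would give is that each such extra layer is merely a further layer of $A$ (or of $B$) that the greedy continuation would itself generate en route to $k_{i+1}$; it advances the same deterministic sequence and is therefore absorbed into the monotone union without duplication. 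Establishing precisely that servicing $k_i$ leaves the generator in a state from which resumption reproduces exactly the layers of a fresh $k_{i+1}$-selection closes the gap, after which the $\alpha^2 \sum_i k_i$ bound is immediate from the theorem.
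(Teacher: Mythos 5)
You set up the same skeleton as the paper (the greedy layer-generation rule is $k$-independent, so online selections should just resume one deterministic sequence, and the cost can be charged forward to later selections), and you correctly isolate the only real obstacle: the boundary phase of theorem~\ref{thm:selection-by-generating-alpha-squared-k-terms}, which adds extra layers after $u\geq k$. However, your resolution of that obstacle is an assertion, and as stated it is false for the \emph{unmodified} algorithm. In the boundary phase (say $\max(B_y)<\max(A_x)$), the theorem's procedure keeps appending layers $B_{y+1},B_{y+2},\ldots$ until their total size reaches $c_x$, with no test on their values. It can happen that $\max(B_{y+1})\geq\max(A_x)$ while $c_{y+1}$ is still far smaller than $c_x$; the unmodified procedure then generates further layers $B_{y+2},B_{y+3},\ldots$, all of whose entries are $\geq\max(B_{y+1})\geq\max(A_x)$ by the layer-ordered property. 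At that state the greedy rule would generate $A_{x+1}$, not more layers of $B$, so these layers are \emph{not} on the deterministic staircase en route to $k_{i+1}$ and need never be generated by any later selection. This breaks exactly the prefix/monotonicity property on which both of your phrasings rest: the union of online-generated layers is no longer the layer set of a fresh $k_N$-selection, and the new layers generated while servicing $k_i$ are no longer a subset of those of a fresh $k_i$-selection, so neither the $\alpha^2 k_N$ bound nor the telescoping charge follows.

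The missing idea --- and the actual content of the paper's proof --- is that the algorithm must be \emph{modified}, not merely re-analyzed: the boundary phase is given a second termination condition, stopping as soon as the most recently added layer $B_t$ satisfies $\max(B_t)\geq\max(A_x)$. This preserves correctness, because $\max(A_x)$ is an upper bound on the partitioning value of the $k_1$-selection on $A|B$, so no layer beyond such a $B_t$ (all of whose values are $\geq\max(B_t)\geq\max(A_x)$) can contribute to that selection. With this modification, every boundary layer is generated only while $B$'s current maximum is below $\max(A_x)$, which is precisely the greedy rule; hence every layer produced while servicing $k_1$ would also be produced in a subsequent $k_2$-selection, and your monotone-union/charging argument then goes through verbatim to give the $\alpha^2\cdot(k_1+k_2+\cdots)$ bound. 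Without supplying that modification (and its correctness argument), your proof has a genuine gap at the one step you yourself flagged as critical.
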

\begin{proof}
  Consider cases where $\max(A_x)\neq\max(B_y)$ and where the
  algorithm from
  theorem~\ref{thm:selection-by-generating-alpha-squared-k-terms} adds
  additional layers after $u\geq k$. W.l.o.g., let
  $\max(B_y) < \max(A_x)$, and thus layers may be added to $B$. The
  algorithm from
  theorem~\ref{thm:selection-by-generating-alpha-squared-k-terms}
  finishes adding layers to $B$ when these accumulated layers
  $B_t, B_{t+1}, \ldots B_{t+\ell}$ have total size exceeding that of
  $A_x$: $c_t + c_{t+1} + \cdots c_{t+\ell} \geq c_x$.

  Modify the selection algorithm from
  theorem~\ref{thm:selection-by-generating-alpha-squared-k-terms} so
  that it may also terminate when the most recent layer added to $B$,
  $B_t$, has $\max(B_t) \geq \max(A_x)$. This does not alter the
  correctness of the algorithm, because $\max(A_x)$ was an upper bound
  on the partitioning value for the $k_1$-selection on $A|B$;
  therefore, no layers beyond $B_t$ where
  $\max(A_x) \leq \max(B_t) \leq B_{t+1} \leq B_{t+2} \leq \cdots$ may
  be included in the $k_1$-selection.

  In this modified algorithm, any layers added to $B$ would have been
  added regardless in a subsequent $k_2$-selection (wherein each
  iteration generated a new layer for the LOH with a smaller $\max$
  generated thus far) if $k_2 > k_1$.
  \[
  \qedhere
  \]
\end{proof}

If the runtime to generate terms in the LOHs of $A$ and $B$ (both of
rank $\alpha$) is constant per term generated (\emph{i.e.}, linear
overall), then the runtime (including the constant from $\alpha$) to
generate $k$ terms will be $O\left(\alpha^2\cdot k\right)$
(corollary~\ref{thm:concatenation-selection-runtime}).

\begin{corollary}
  \label{thm:concatenation-selection-runtime}
  Given two processes that generate layers for layer-ordered heaps $A$
  and $B$ of rank $\alpha$ where the runtime to generate terms of the
  layer-ordered heaps is linear in the total number of terms
  generated, $k$-selection on $A|B$ can be run in
  $O\left(\alpha^2\cdot k\right)$.
\end{corollary}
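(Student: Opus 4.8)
The plan is to combine the two facts that Corollary~\ref{thm:online-selection} has just delivered: a bound on the \emph{number of terms generated} and, separately, an accounting of the \emph{work per term}. The corollary says that successive selections $k_1 < k_2 < \cdots$ on $A|B$ generate at most $\alpha^2\cdot(k_1+k_2+\cdots)$ values of $A$ and $B$ combined. By hypothesis, producing each such value costs $O(1)$ (the generating processes are linear in the number of terms), so the total cost of \emph{generating} the underlying LOH terms for a $k$-selection is $O(\alpha^2\cdot k)$. What remains is to show that the selection bookkeeping itself — deciding, on each iteration, which LOH to extend, comparing the current layer maxima $\max(A_x)$ and $\max(B_y)$, and accumulating layer sizes toward the stopping threshold — adds at most a matching $O(\alpha^2\cdot k)$.

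So the key steps, in order, are as follows. First, invoke Theorem~\ref{thm:selection-by-generating-alpha-squared-k-terms} to fix the number of generated terms at $u < \alpha\cdot k$ in the main phase and $u + \alpha\cdot c_x \lesssim \alpha^2\cdot k$ after the corrective extra layers; this is exactly the $\approx\alpha^2\cdot k$ quantity I want as the overall term budget. Second, charge the algorithm's control work: each iteration of the ``extend the LOH with the smaller current maximum'' loop performs a single $O(1)$ comparison of $\max(A_x)$ versus $\max(B_y)$ and generates exactly one new layer, and the number of such iterations is bounded by the number of layers, which is itself $O(\log_\alpha k)$ and hence dominated by the term count $O(\alpha^2\cdot k)$. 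Third, account for the final one-dimensional $k$-selection over the $\leq \alpha^2\cdot k$ already-generated values on $A|B$: using a linear-time selection routine such as median-of-medians or the soft-heap selection cited earlier in the manuscript, this costs $O(\alpha^2\cdot k)$ as well. Summing these three contributions — term generation, loop control, and final selection — each $O(\alpha^2\cdot k)$, yields the claimed runtime.

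The main obstacle I anticipate is ensuring the per-term generation cost genuinely composes as $O(1)$ through the recursive tree structure rather than only at the leaves. The corollary's hypothesis that ``the runtime to generate terms is linear in the number of terms generated'' is the crucial assumption; at an $A+B$ node this cost is precisely what Corollary~\ref{thm:online-selection} and this very corollary are establishing for the layer below, so the argument is implicitly inductive up the balanced binary tree, and I would want to state cleanly that the amortized linear generation bound is \emph{inherited} by each $A+B$ node from its children. The subtle point is that the $\alpha^2$ factor compounds with depth — each level multiplies the generated-term count by $\alpha^2$ — which is exactly why the final FastSoftTree runtime in Table~\ref{table:runtimes-and-space} carries an exponent $\log_2(\alpha^2)$ rather than a flat constant; for this isolated corollary, though, I only need the single-node statement, treating the children's generation cost as a given linear-time oracle, so the compounding is deferred to the later whole-tree analysis.
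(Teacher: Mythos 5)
Your proposal matches the paper's proof in both structure and substance: invoke Theorem~\ref{thm:selection-by-generating-alpha-squared-k-terms} to bound the generated terms by $\approx\alpha^2\cdot k$, use the linear-generation hypothesis to convert that count into $O(\alpha^2\cdot k)$ generation time, and finish with a linear-time one-dimensional $k$-selection (median-of-medians or soft-heap based) over the $O(k+\alpha^2\cdot k)$ generated values, summing to $O(\alpha^2\cdot k)$. Your extra accounting of the loop-control comparisons (bounded by the $O(\log_\alpha k)$ layer count) and your remark that the tree-level compounding of $\alpha^2$ is deliberately deferred are both correct and consistent with the paper, which differs only in adding an implementation note about storing layers in separate arrays to avoid a $\frac{1}{\alpha-1}$ resizing overhead.
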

\begin{proof}
  By theorem~\ref{thm:selection-by-generating-alpha-squared-k-terms},
  $\approx\alpha^2\cdot k$ terms are generated from $A$ and $B$ combined.
  $k+\alpha^2\cdot k$ is used to perform linear-time $k$-selection on
  the final results (\emph{e.g.}, using
  median-of-medians\cite{blum:time} or
  soft-select\cite{chazelle:soft}). Thus the runtime is
  $O(\alpha^2\cdot k + k + \alpha^2\cdot k) = O(\alpha^2\cdot k)$.
  Note, the LOH arrays can not be resized every time a new layer is
  generated without avoiding a $\frac{1}{\alpha-1}$ amortization
  constant. Instead, store each layer in a separate array and a list
  of pointers to the beginning of the layers.
  \[
  \qedhere
  \]
\end{proof}

What enables the soft heap to perform {\tt Soft-Select} in
$o(k\log(k))$ is the corruption of items in the soft heap. For
$\epsilon = 0.25$, the amount of corruption after a $k$-selection is
$\leq \epsilon \cdot I = 2\cdot k$\cite{kaplan:selection}. In order to
stop corruption from accumulating over successive selections, the soft
heap is destroyed and rebuilt after each layer is generated so that
the corruption is relative to the current size of the soft heap and
not the number of insertions over the previous $k1,k2,\ldots$
selections. 

At the end of {\tt Soft-Select}, a one-dimensional selection is
performed. In order to keep all necessary values for future layer
generation, the values not included in the output of the selection are
stored in a list to be included during the next one-dimensional
selection.

\subsubsection*{Termination of FastSoftTree}
When FastSoftTree is asked to retrieve the top $k$ values it does so
by asking the root node to generate layers. The root will then require
its children to generate layers of their own, and this will trickle
down the tree. The algorithm terminates when the root node has
generated enough layers such that the size of its LOH is $\geq
k$. Then a one-dimensional selection is performed to retrieve the top
$k$ values from the root's LOH.

\subsubsection*{Space}
At any internal node, time to fetch is linear with the number of $A+B$
values selected from that node. Since these values and nothing
substantially more are stored, the time and space are
comparable. Thus, using the argument for the runtime below, we have
space usage $\Theta(n\cdot m)$ to store the input arrays plus
$O(k\cdot \log(m))$ to store values generated during the selection.

\subsubsection*{Runtime}
By corollary~\ref{thm:concatenation-selection-runtime}, the runtime to
perform a $k$-selection on $A|B$ is $O(\alpha^2\cdot k)$. Once the
possibly used layers of $A$ and $B$ are generated by selection on
$A|B$, the modified {\tt Soft-Select} on $A+B$ has runtime $O(k)$.

Note that the total problem size at the root is $k$. In the next
layer, the total problem size will be asymptotically
$\leq \alpha^2\cdot k$. Since $\alpha>1$, the total runtime will be
dominated by the work done at the $A+B$ nodes whose children are
leaves. Thus, the runtime of propagation in the tree is
$O\left(k\cdot {(\alpha^2)}^{\log_2(m)}\right) = O\left(k\cdot
  m^{\log_2({\alpha^2})}\right)$ and the runtime to load and LOHify
the $m$ arrays is
$\Omega \left ( m\cdot \left( n\cdot \log(\frac{1}{\alpha-1}) +
    \frac{n\cdot \alpha\log(\alpha)}{\alpha-1}\right )\right)$
\cite{pennington:optimal}.

We now seek to find a good value of $\alpha$. This could be done by
computing $\frac{\partial}{\partial \alpha}$ and finding the $\alpha$
value that achieves a zero in the partial derivative and thus an
extremum, but Mathematica 11 could not solve the resulting equation
for $\alpha$. While we can not solve for the optimal $\alpha$
directly, we may examine some cases.

If $\alpha=1$, then the LOHs will be sorted, causing FastSoftTree to
have the runtime of SortTree. For
$\alpha \leq \left(1 + \frac{1}{n}\right)$,
$n\log(\frac{1}{\alpha-1}) \approx n\log(n)$, causing the LOHification
to be just as bad as sorting. Choosing $\alpha = \sqrt{2}$ may in the
worst case lead to doubling the problem size in each successive layer
(leading to an algorithm with propagation time in $\Omega(k\cdot m)$,
equivalent to or worse than SoftTree). Thus the optimal $\alpha$ must
be in $ \left(1 + \frac{1}{n}, \sqrt{2}\right)$ which gives the
overall runtime of the algorithm
$O\left (m\cdot n \cdot\log\left( \frac{1}{\alpha-1}\right)\right)$ to
load the data and construct the tree plus
$O(k\cdot m^{\log_2(\alpha^2)})$ for the $k$-selection.

Note that when $n \gg k$, the cost of layer-ordered heapification on
the $m$ input vectors of length $n$ could be lessened by first
performing $(k+m-1)$-selection on the concatenated $X_i$ vectors (by
na\"{i}vely concatenating them and using linear-time one-dimensional
selection $O(m\cdot n + k)=O(m\cdot n)$ time); by
lemma~\ref{thm:reduction-to-selection-on-concatenation}, this finds
only the values that may contribute to the final result. Because each
array $X_i$ must have at least $|X_i|\geq 1$, there are $k-1$ free
values that may be distributed between the arrays. To LOHify the
resulting, trimmed $X_i$ vectors will be linear in each and so will be
$O\left(\sum_{i=1}^m |X_i| \right) = O(k+m)$. If sorting is used to
LOHify the trimmed $X_i$ arrays, then the runtime is worst when
w.l.o.g. $|X_1|=|X_2|=\cdots=|X_{m-1}|=1, |X_m|=k-1$, because
comparison sort is superlinear. This has a runtime $O(m + k\log(k))$.

Regardless of which term dominates, the FastSoftTree can easily
perform selection with runtime $o(k\cdot m)$ by simply setting
$\alpha=1.05$: This results in runtime $O(k\cdot
m^{0.141\ldots})$. Thus, the FastSoftTree method has the best
theoretical runtime of all methods introduced here.

\subsubsection*{Implementation}
A pseudocode implementation of the selection function for FastSoftTree
is provided in
Algorithms~\ref{alg:compute-next-layer} and \ref{alg:concatenationselect}.

\begin{algorithm}[H]
\DontPrintSemicolon
  \KwInput{$k$}
  \KwOutput{Top $k$ values}

  \While {number\tus values\tus generated < k}
  {
    compute\tus next\tus layer\tus if\tus available()\;
  }
  \Return select(LOH, $k$)\;
\caption{select}
\end{algorithm}

\begin{algorithm}[H]
\DontPrintSemicolon
  \KwInput{$k$}
  \KwOutput{Top $k$ values}
  concatenation\tus select(k)\;
  insert\tus purgatories()\;
  SoftSelect(k)
  soft\tus heap.rebuild();
\caption{Compute next layer}\label{alg:compute-next-layer}
\end{algorithm}

\begin{algorithm}
  \DontPrintSemicolon
  \KwIn{A sequence of integers $\langle a_1, a_2, \ldots, a_n \rangle$}
  \KwOut{The index of first location with the same value as in a previous location in the sequence}
  \While{total\tus values\tus generated\tus so\tus far $< k$}{
    \If{\Not A.more\tus layers\tus are\tus available()}{
      B.compute\tus next\tus layer\tus if\tus available()\;
    }
    \ElseIf{\Not B.more\tus layers\tus are\tus available()} {
      A.compute\tus next\tus layer\tus if\tus available()\;
    }
    \Else{
      \If {$A.max\tus key\tus generated() \leq B.max\tus key\tus generated()$ } {
        A.compute\tus next\tus layer\tus if\tus available()\;
        total\tus values\tus generated\tus so\tus far $ += $ A.size\tus of\tus most\tus recent\tus layer()\;
      }
      \Else{
        B.compute\tus next\tus layer\tus if\tus available()\;
        total\tus values\tus generated\tus so\tus far $ += $ B.size\tus of\tus most\tus recent\tus layer()\;
      }
    }
  }

  \If{A.max\tus value = B.max\tus value} {
    \Return\;
  }

  \tcp{WLOG let A.max\tus value $>$ B.max\tus value}
  additional\tus values\tus from\tus A$ = 0$\;
  \While {additional\tus values\tus from\tus $A < $ B.size\tus of\tus last\tus layer\tus generated} {
    A.compute\tus next\tus layer\tus if\tus available()\;
    additional\tus values\tus from\tus A$ += $ A.size\tus of\tus last\tus layer\tus generated\;
    \If {A.max\tus value\tus generated $\geq$ B.max\tus value\tus generated } {
      \Return\;
    }
  }
\caption{{\sc ConcatenationSelect}}\label{alg:concatenationselect}
\label{algo:duplicate}
\end{algorithm}                 

\section*{Results}
\subsection*{Memory use and runtime}
All plots are log-log plots in order to compare the theoretical
runtime of each method (\emph{e.g.} the slope of a linear runtime
method will have a slope of one while a quadratic method will have a
slope of two). Figure~\ref{fig:memory} plots how memory usage grows
relative to $k$ and relative to $m$ when each $X_i$ is uniformly
distributed and exponentially distributed. Figure~\ref{fig:runtime}
plots how runtime grows relative to $k$ and relative to $m$ when each
$X_i$ is uniformly distributed and exponentially distributed. Memory
usage was measured using the Valgrind, specifically its Massif tool.

\begin{table}
  \begin{tabular}{c|cc}
    &Uniformly distributed $X_i$ & Exponentially distributed $X_i$ \\
    \hline
    \raisebox{5\normalbaselineskip}[0pt][0pt]{\rotatebox[origin=c]{90}{Vary k}} & \includegraphics[width=.5\linewidth]{./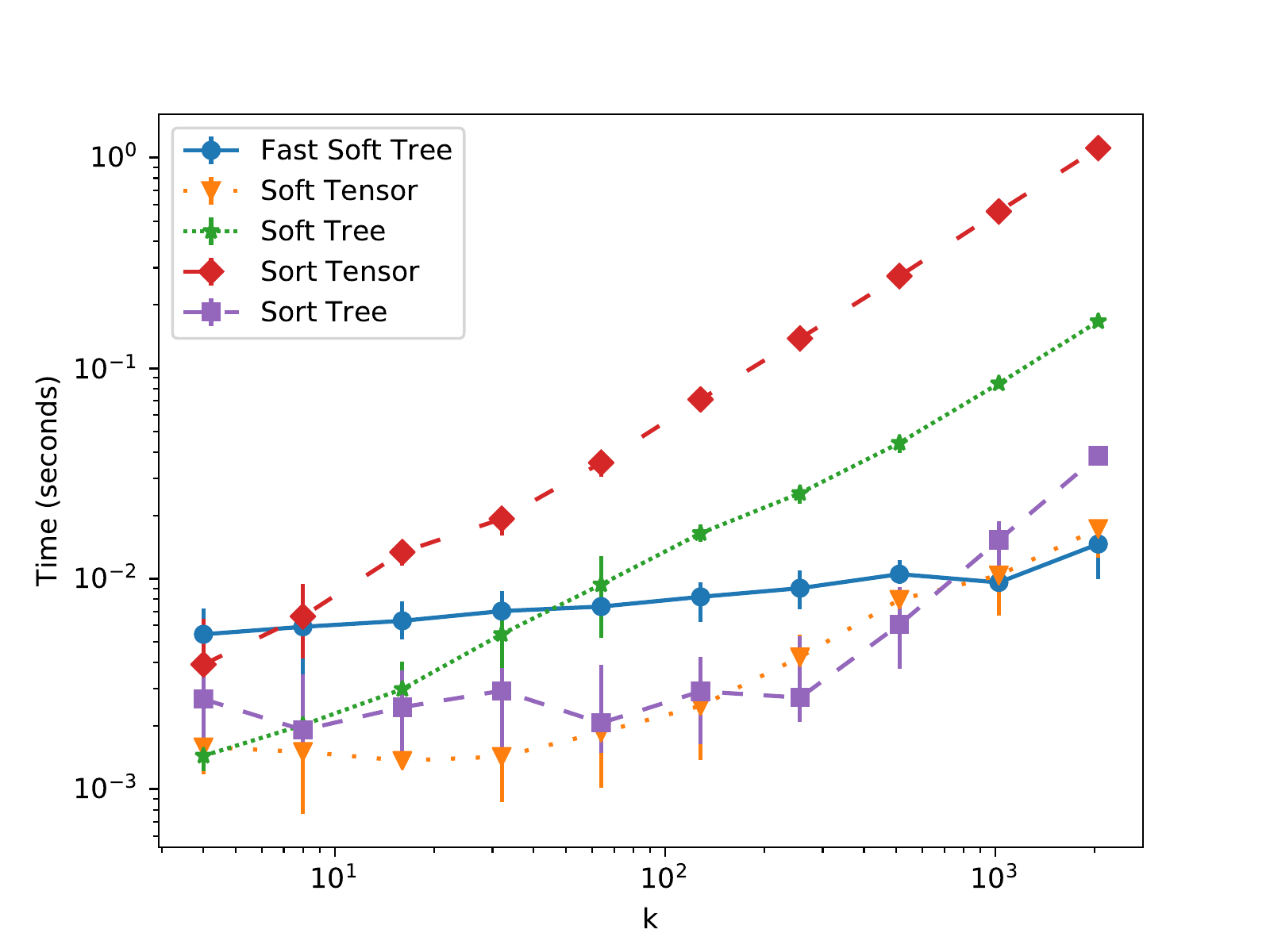} & \includegraphics[width=.5\linewidth]{./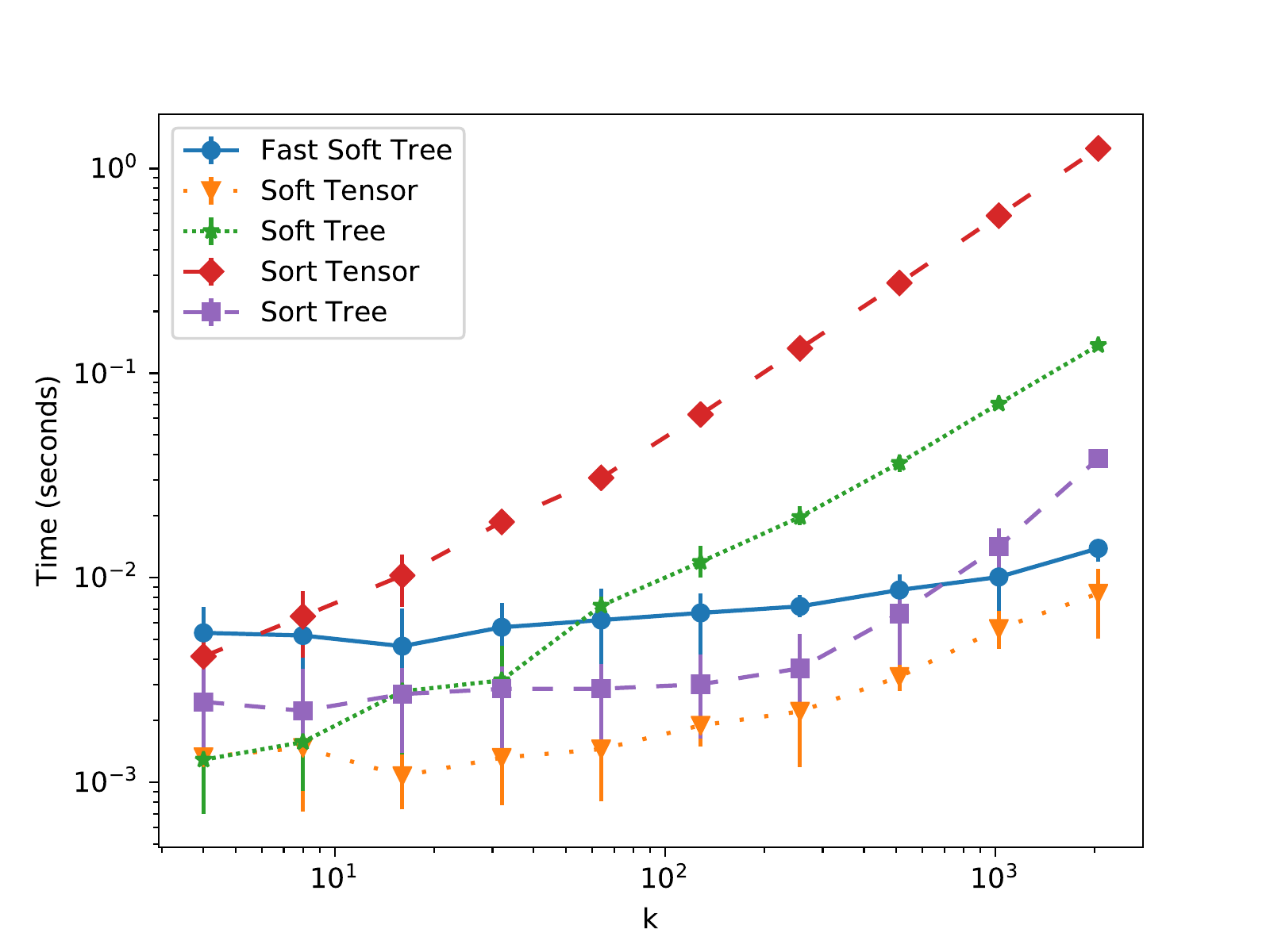}  \\
    \raisebox{5\normalbaselineskip}[0pt][0pt]{\rotatebox[origin=c]{90}{Vary m}} & \includegraphics[width=.5\linewidth]{./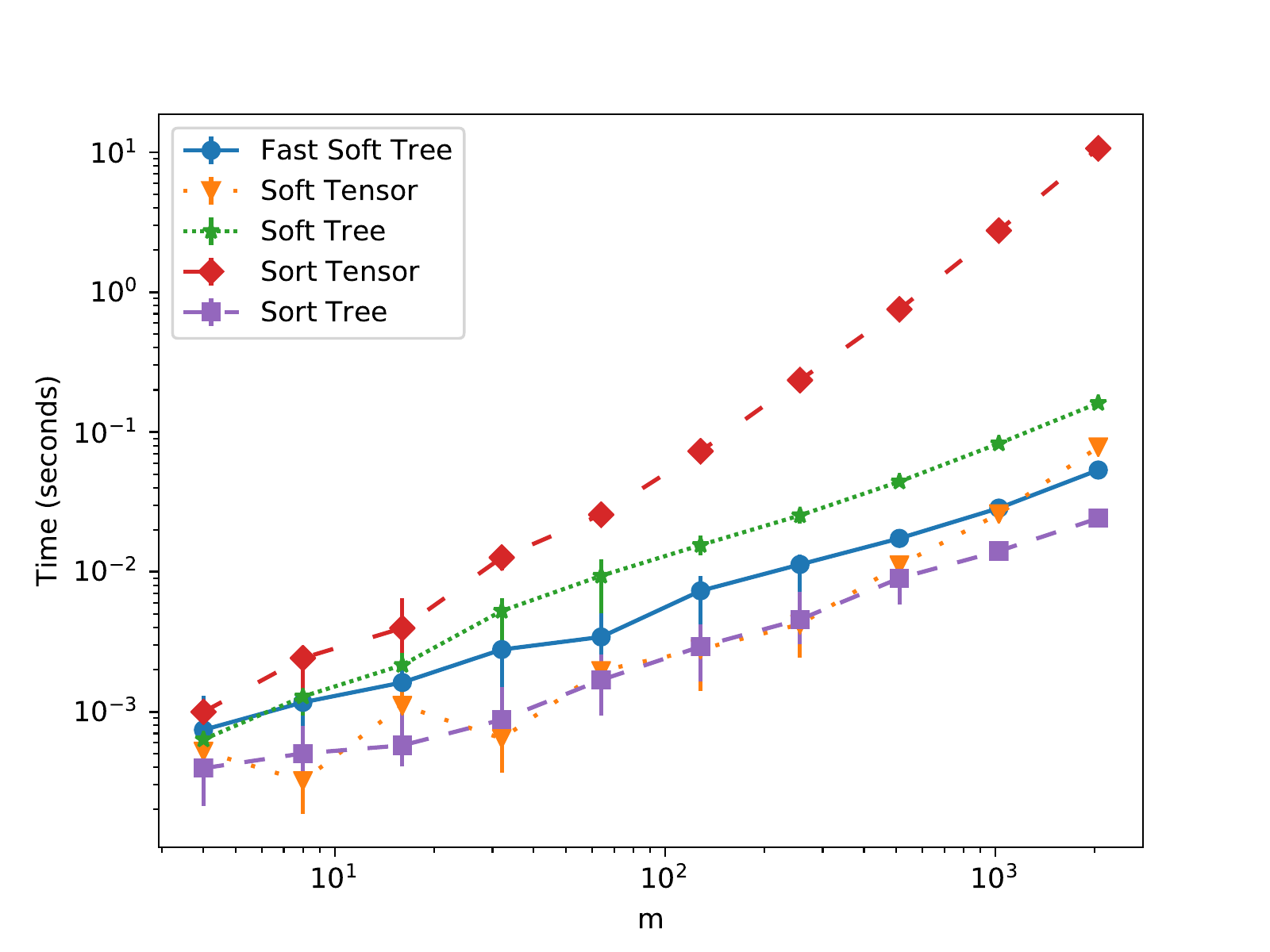} & \includegraphics[width=.5\linewidth]{./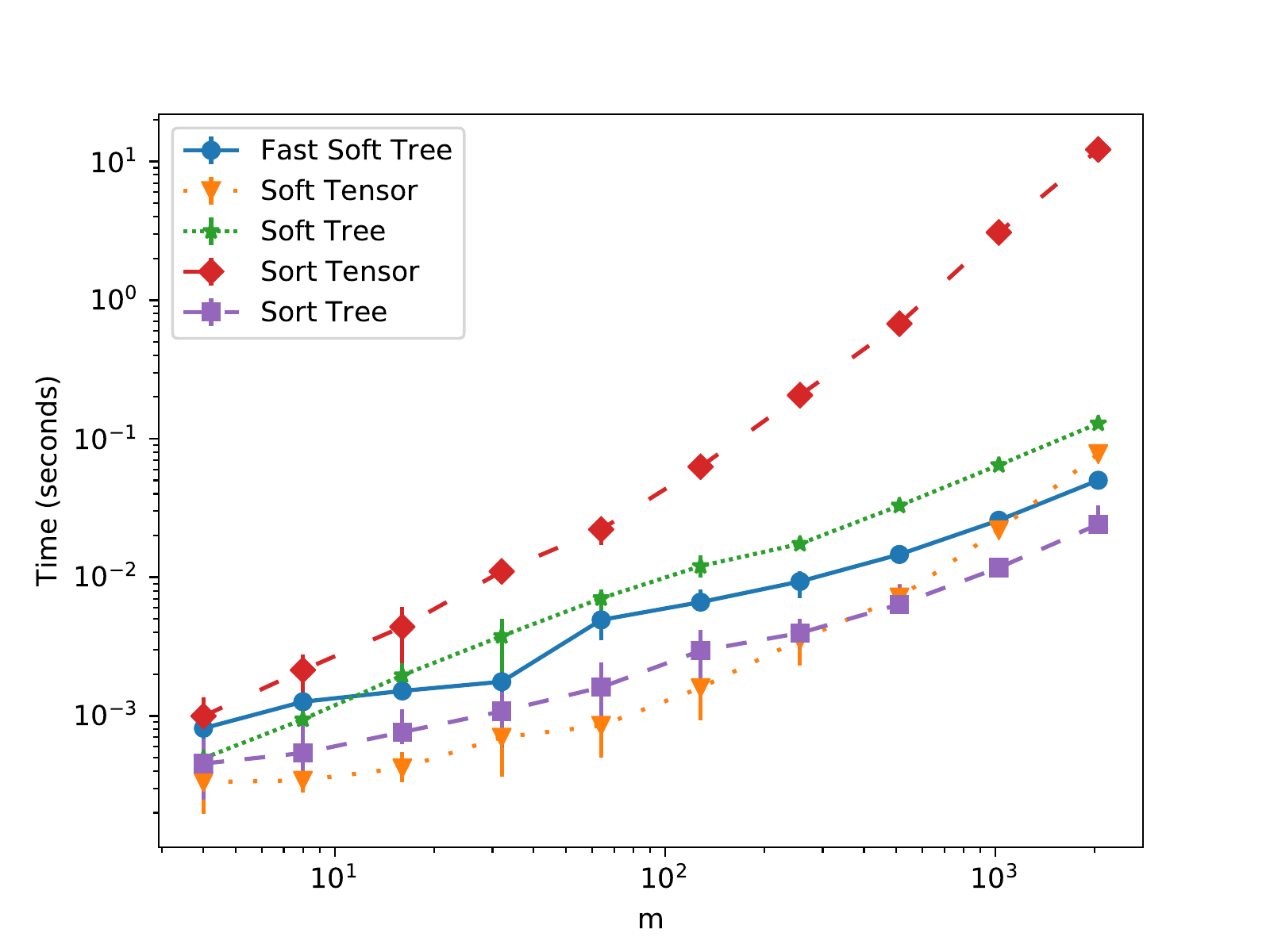}  \\
  \end{tabular}  \caption{{\bf Runtime.} Runtime for each method is plotted. When $k$
    is varied, it is $\in\{4, 8, 16, \ldots 256\}$ with $n=32$ and
    $m=128$. When $m$ is varied, it is $\in\{4, 8, 16, \ldots 256\}$
    with $n=32$ and $k=128$. We measure the average over twenty replicates
    with error bars showing the minimum and maximum. FastSoftTree was
    run with $\alpha=1.1$.
    \label{fig:runtime}}
\end{table}

\begin{table}
  \begin{tabular}{c|cc}
    &Uniformly distributed $X_i$ & Exponentially distributed $X_i$ \\
    \hline
    \raisebox{5\normalbaselineskip}[0pt][0pt]{\rotatebox[origin=c]{90}{Vary k}} & \includegraphics[width=.5\linewidth]{./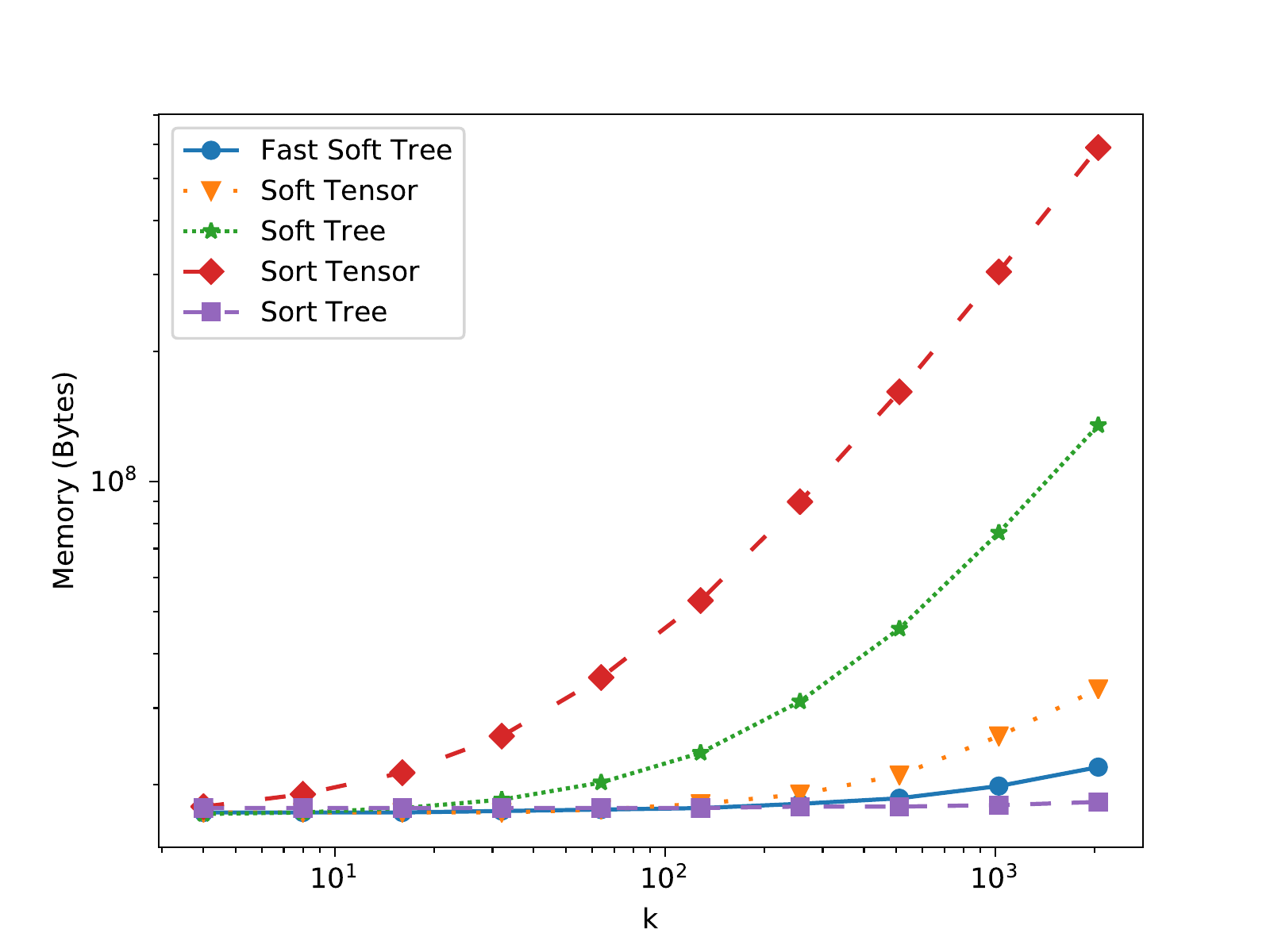} & \includegraphics[width=.5\linewidth]{./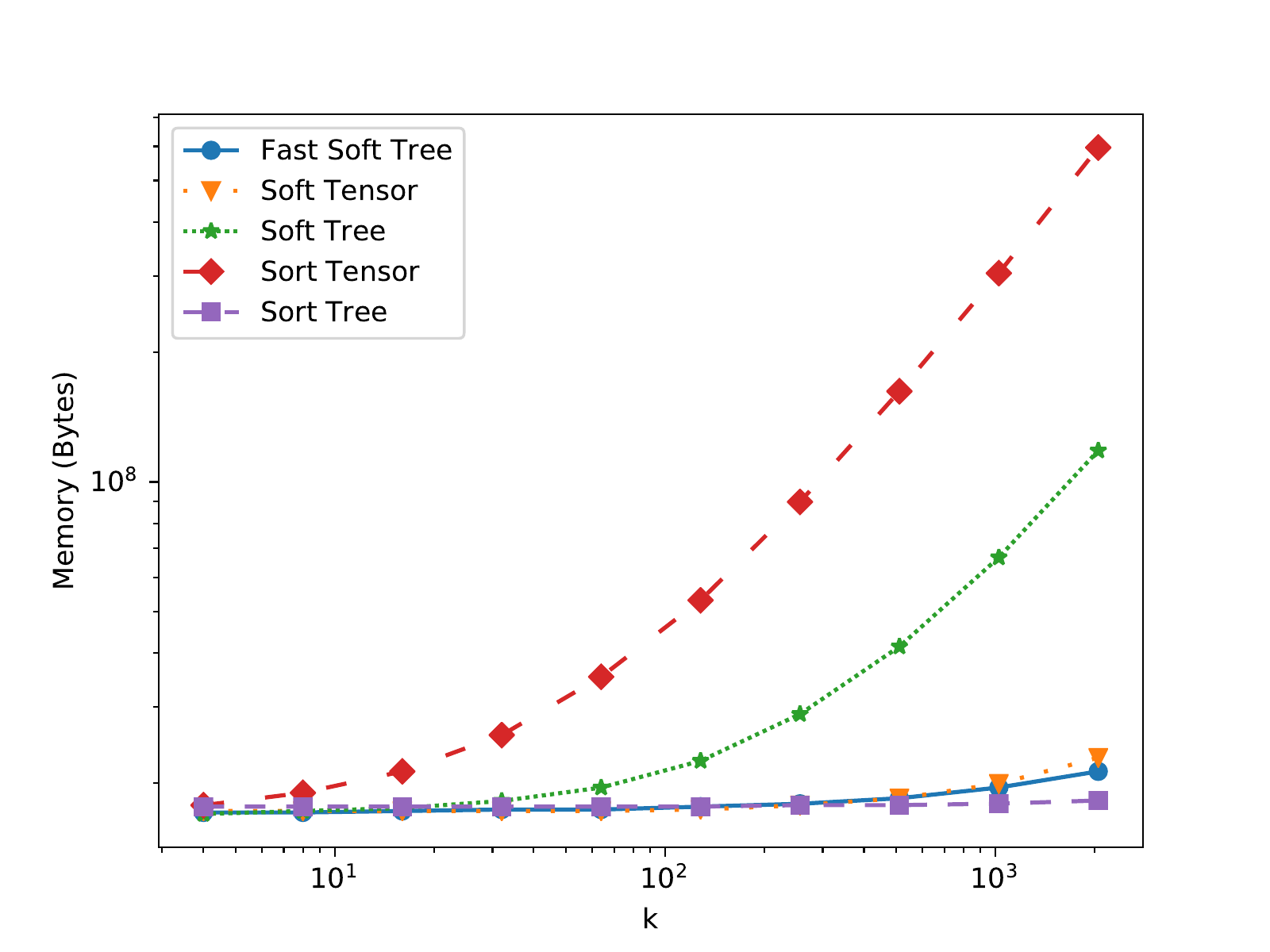}  \\
    \raisebox{5\normalbaselineskip}[0pt][0pt]{\rotatebox[origin=c]{90}{Vary m}} & \includegraphics[width=.5\linewidth]{./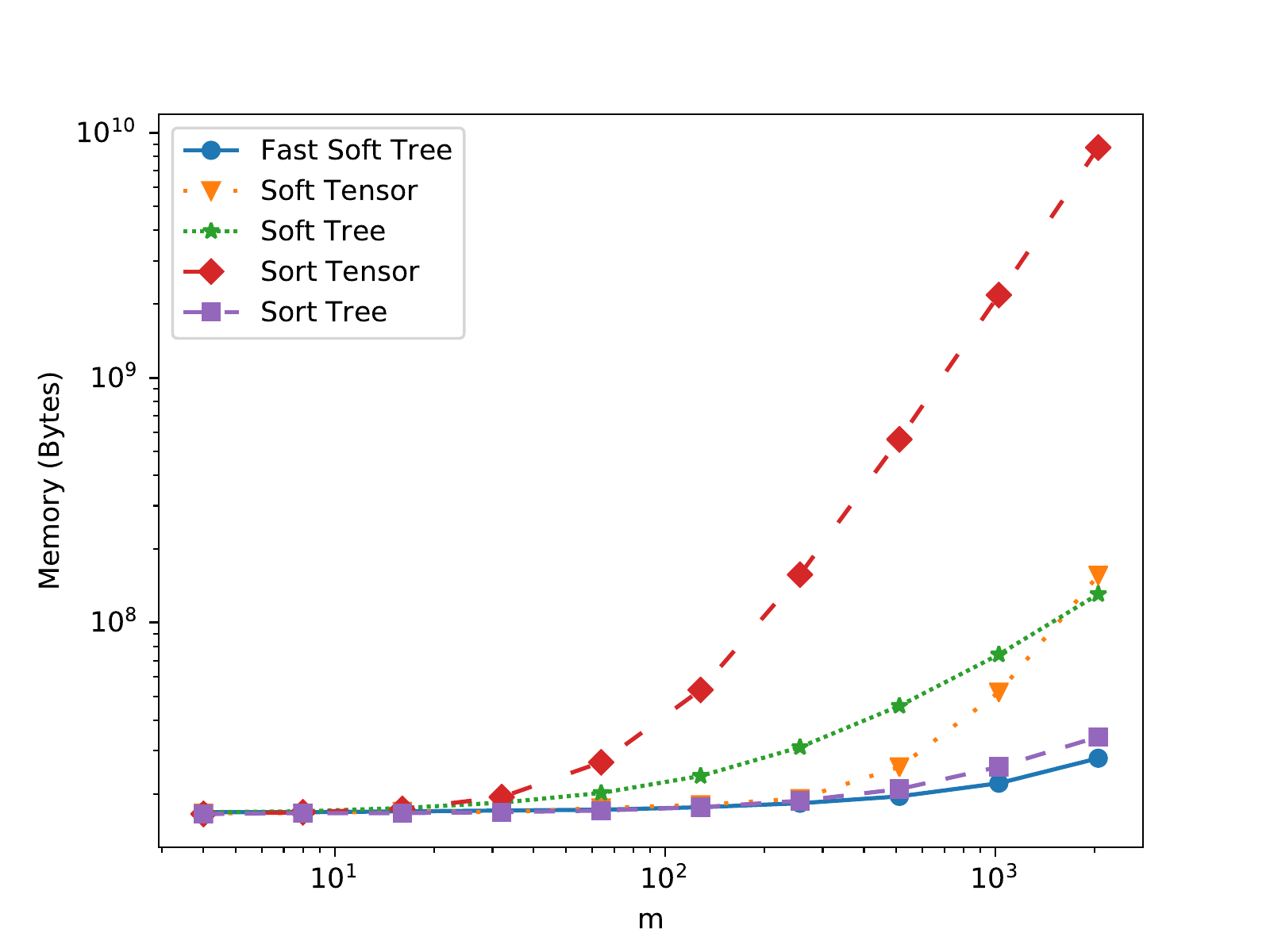} & \includegraphics[width=.5\linewidth]{./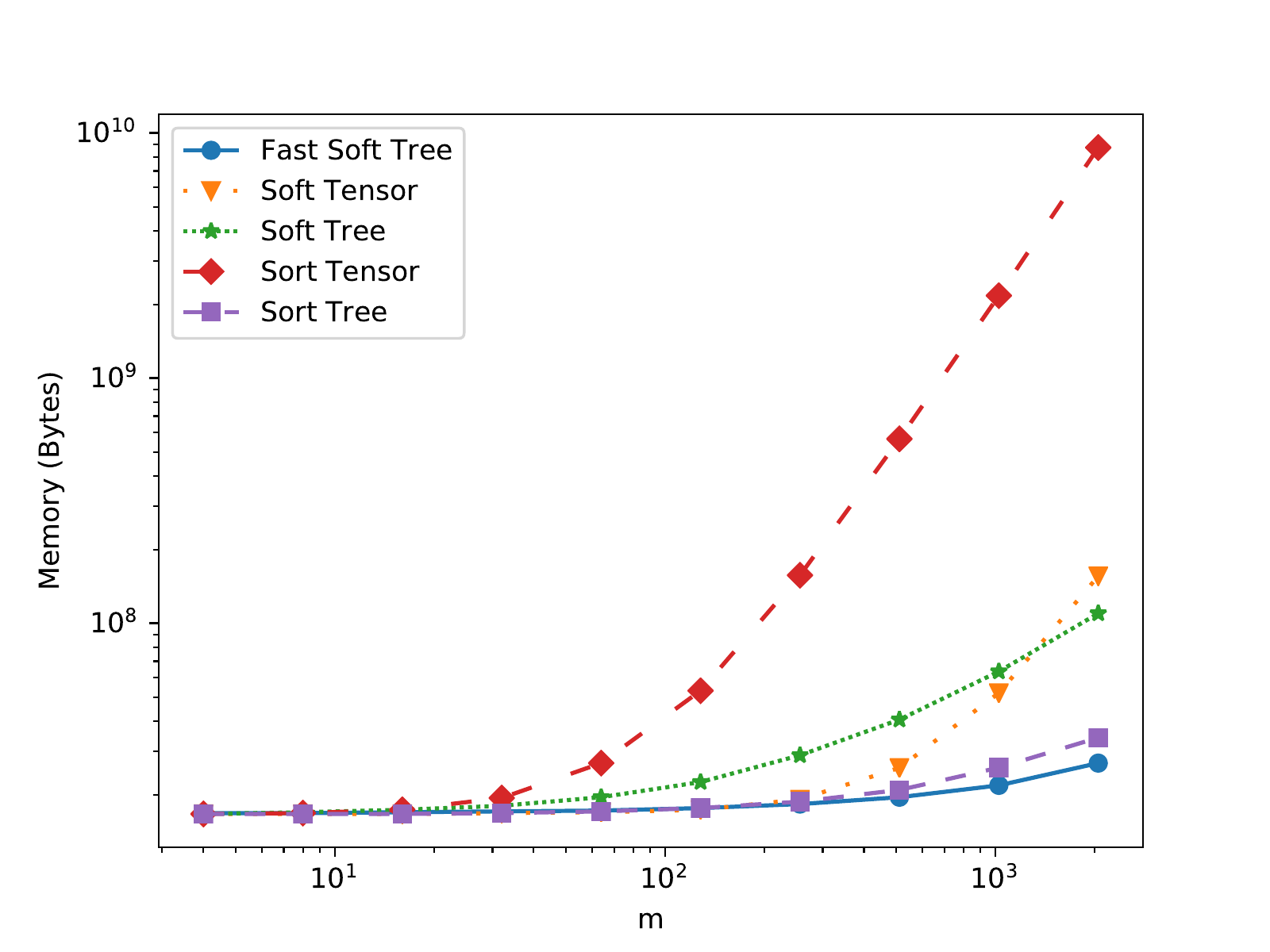}  \\
  \end{tabular}  
\caption{{\bf Memory usage.} Memory usage for each method is
  plotted. When $k$ is varied, it is $\in\{4, 8, 16, \ldots 256\}$
  with $n=32$ and $m=128$. When $m$ is varied, it is $\in\{4, 8, 16,
  \ldots 256\}$ with $n=32$ and $k=128$. We measure the average over
  twenty replicates. FastSoftTree was run with
  $\alpha=1.1$.
  \label{fig:memory}}
\end{table}

\subsection*{Propagation depth in SortTree}
The SortTree method does not always need to fetch a value from either
children ($A$ or $B$) in order to compute the new value in $A+B$. As a
result, the expected overall runtime may sometimes be $\in O(m\cdot n
+ k\log(k))$ rather than $\in O(m\cdot n + k\log(k)\log(m))$.

Table~\ref{table:vary-k-uni} and table~\ref{table:vary-k-exp}
demonstrate the in-practice propagation depth when $k$ is varied and
when the $X_i$ are uniformly and exponentially distributed,
respectively. Table~\ref{table:vary-m-uni} and
table~\ref{table:vary-m-exp} demonstrate the in-practice propagation
depth when $m$ is varied and when the $X_i$ are uniformly and
exponentially distributed, respectively. For both the uniformly and
exponentially distributed data, the average number of pops from each
leaf is sublinear with respect to $k$ and grows inversely proportional
to $m$.

\begin{table}
\centering
\scriptsize
\begin{tabular}{ r l l l l l l l }
$k$ &Root &  L1    & L2    & L3    & L4    & L5    & Leaves \\   
64   &64   & 15.50 & 6.000  & 3.750 & 3.312 & 3.031 & 2.953\\
128  &128  & 52.50 & 11.250 & 6.875 & 3.562 & 3.120  & 3.031\\
256  &256  & 34.00 & 12.50  & 5.750 & 3.687 & 3.250 & 3.031\\
512  &512  & 62.00 & 14.250 & 6.375 & 4.25 & 3.312  & 3.046\\
1024 &1024 & 76.50 & 13.250 & 6.125 & 3.937 & 3.218 & 2.984\\
2048 &2048 & 471.0 & 42.000 & 9.125 & 4.75 & 3.375  & 3.109\\
\end{tabular}
  \caption{Data is from varying $k$ with a uniform
    distribution and $m=64$, $n=1024$. This table shows the average number of times a node
    was asked to pop a value based on its depth in the SortTree. The
    left-most column represents the root, as the columns move to the
    right they traverse down the tree until they hit the root.}
\label{table:vary-k-uni}
\end{table}

\begin{table}
\centering
\scriptsize
\begin{tabular}{ r l l l l l l l }
$k$   & Root   &  L1    & L2    & L3    & L4    & L5    & Leaves \\ 
64   & 64.0   & 33.50  & 18.20 & 10.62 & 6.812 & 4.875 & 3.900 \\
128  & 128.0  & 65.50  & 34.25 & 18.62 & 10.75 & 6.843 & 4.890 \\
256  & 256.0  & 129.5  & 66.25  & 34.62 & 18.81 & 10.90 & 6.921\\
512  & 512.0  & 257.5  & 130.2 & 66.62 & 34.81 & 18.87 & 10.89 \\
1024 & 1024.0 & 513.5  & 258.2 & 130.6 & 66.68 & 34.84 & 18.82 \\
2048 & 2048.0 & 1025.0 & 502.7 & 252.9 & 66.18 & 34.59 & 18.70 \\
\end{tabular}
  \caption{Data is from varying $k$ with an exponential
    distribution and $m=64$, $n=1024$. This table shows the average number of times a node
    was asked to pop a value based on its depth in the SortTree. The
    left-most column represents the root, as the columns move to the
    right they traverse down the tree until they hit the root.}
\label{table:vary-k-exp}
\end{table}

\begin{table}
\centering
\scriptsize
\begin{tabular}{rllllllllllll}
m    & Root   &  L1    & L2    & L3    & L4    & L5    & L6    &  L7   & L8    & L9 &   L10    & L11 \\   
64   & 512.0 & 49.00 & 12.00 & 5.625 & 3.813 & 3.343 & 3.125 \\
128  & 512.0 & 48.50 & 10.75 & 5.500 & 3.813 & 3.281 & 3.078 & 2.961 \\ 
256  & 512.0 & 54.50 & 12.00 & 5.500 & 3.688 & 3.250 & 3.078 & 2.992  & 2.945 \\
512  & 512.0 & 71.00 & 14.50 & 6.000 & 4.125 & 3.281 & 3.109 & 2.977 & 2.945 & 2.928 \\
1024 & 512.0 & 45.50 & 14.25 & 5.875 & 4.063 & 3.313 & 3.094 & 2.961 & 2.938 & 2.930 & 2.913 \\
2048 & 512.0 & 46.50 & 10.75 & 4.625 & 3.500 & 3.156 & 3.047 & 2.961 & 2.930 & 2.928 & 2.921 & 2.913 \\
\end{tabular}
  \caption{Data is from varying $m$ with a uniform distribution and
  $k$ = 512 and $n$ = 1024. This table shows the average number of
  times a node was asked to pop a value based on its depth in the
  SortTree. The left-most column represents the root, as the columns
  move to the right they traverse down the tree until they hit the
  root. As $m$ grows so does the depth of the tree, so the right-most
  column which has entries  always represent the leaves. }
\label{table:vary-m-uni}
\end{table}

\begin{table}
\centering
\scriptsize
\begin{tabular}{rllllllllllll}
m    & Root   &  L1    & L2    & L3    & L4    & L5    & L6    &  L7   & L8    & L9  &  L10    & L11 \\   
64   & 512.0  & 257.5 & 130.3  & 66.63 & 34.81 & 18.91 & 10.93 \\
128  & 512.0  & 257.5 & 130.3  & 66.63 & 34.75 & 18.81 & 10.87 & 6.914 \\
256  & 512.0  & 257.5 & 130.2  & 66.62 & 34.81 & 18.81& 10.87 & 6.890& 4.906\\
512  & 512.0  & 257.5 & 130.3  & 66.63 & 34.75 & 18.84 & 10.87 & 6.906 & 4.914 & 3.914 \\
1024 & 512.0  & 257.5 & 130.3  & 66.63 & 34.81 & 18.88 & 10.90 & 6.914 & 4.902 & 3.919 & 3.418 \\
2048 & 512.0  & 46.50 & 10.75  & 4.630 & 3.500 & 3.156 & 3.046 & 2.960 & 2.929 & 2.927 & 2.920 & 2.912 \\
\end{tabular}
\caption{Data is from varying $m$ with an exponential distribution and
  $k$ = 512 and $n$ = 1024. This table shows the average number of
  times a node was asked to pop a value based on its depth in the
  SortTree. The left-most column represents the root, as the columns
  move to the right they traverse down the tree until they hit the
  root. As $m$ grows so does the depth of the tree, so the right-most
  column which has entries  always represent the leaves. }
\label{table:vary-m-exp}
\end{table}

\section*{Discussion}
Although the FastSoftTree method has the best theoretical runtime of
all methods here (\emph{e.g.}, when $\alpha=1.05$), the SortTree
method performed best in practice for these experiments. One reason
for the practical efficiency of the SortTree is the fact that it's
unlikely that every layer between the root and leaves are visited when
computing each next value in the sort of $X_1+X_2+\cdots +X_m$
(tables~\ref{table:vary-k-uni},~\ref{table:vary-k-exp},~\ref{table:vary-m-uni},~\&~\ref{table:vary-m-exp}).

Both sorting methods have a significant advantage in cache
performance. All data in SortTensor and SortTree are held in arrays
which are contiguous in memory. The soft heap requires pointers to
achieve its fast theoretical runtime which may not be able to take
advantage of the speed of a CPUs cache over RAM. If the data is
limited in size to the number of particles in the universe which is
approximately $2^{270}$\cite{eddington:mathematical}, the $\log(k)$
constant will be $\leq 270$. This may be small enough that, combined
with its cache performance, the $\log(k)$ term is is dwarfed by the
soft heap's runtime constant.

Finding the optimal $\alpha$ for any $n,m,k$ problem could also lead
to better performance in practice for FastSoftTree. It may even be
possible to mix different $\alpha$ values throughout layers of the
tree. Allowing $\alpha \geq 2$ at the roots gives a faster LOHify time
while a smaller $\alpha$ higher up in the tree means less values
generated overall.

If it were possible to efficiently implement the pairwise $A+B$
selection without soft heaps (or reminiscent pointer basic data
structure), then the practical performance of the FastSoftTree would
likely be much better. Layer-ordered heaps are essentially vectors
iteratively partitioned (as from quicksort); they can be stored in a
contiguous manner and with low overhead overall (including a fast
implementation of median-of-medians instead of soft-select for the
one-dimensional selection).

Although they are quite simple, layer-ordered heaps can be used to
describe other problems: For instance, given a fast, in-place method
to construct a layer-ordered heap of an array, it is possible to
implement an efficient one-dimensional selection. This is accomplished
as a simplified form of the algorithm from
theorem~\ref{thm:concatenation-selection-runtime}. A layer-ordered
heap is built on the full array, and then layers are added (small
layers first) until the selection threshold is exceeded. At this
point, the layer of the layer-ordered heap that resulted in exceeding
the selection threshold is itself partitioned into a layer-ordered
heap, and selection on the remaining values is performed only within
that layer. This is continued until the exact selection threshold is
met. Assuming a linear-time layer-ordered heapification algorithm, the
runtime of this one-dimensional $k$-selection will be characterized by
the recurrence $r(k)=k + r((\alpha-1)\cdot k)$; the final layer will
be added only if we have $<k$ so far, and we will overshoot by at most
a factor of $\alpha$ (and thus the final layer has size $\alpha\cdot k
- k$. For any $\alpha<2$ and not close to 1, this achieves linear-time
one-dimensional selection by essentially achieving a more and more
sorted ordering at the selection threshold.

It is likely that algorithms like these using layer-ordered heaps will
be useful on other problems which currently use sorting when it is not
completely necessary such as non-parametric test statistics.

\section*{Acknowledgements}
This work was supported by grant number 1845465 from the National
Science Foundation.

\section*{Supplemental information}
\subsection*{Code availability}
Both {\tt C++} and Python source code for all methods is available at
\url{https://bitbucket.org/seranglab/cartesian-selection-with-layer-ordered-heap}
(MIT license, free for both academic and commercial use).

\end{document}